\pdfoutput=1
\documentclass[runningheads]{llncs}

\usepackage{amsmath,amssymb,stmaryrd} %
\usepackage{xspace}
\usepackage{xcolor}
\usepackage{url}
\usepackage{listings}
\usepackage{graphicx}
\usepackage{hyperref}
\usepackage{cleveref}
\usepackage{wrapfig}
\usepackage{tabularx}

\definecolor{sh_comment}{rgb}{0.12, 0.38, 0.18}
\definecolor{sh_keyword}{rgb}{0.37, 0.08, 0.25}  %
\definecolor{sh_string}{rgb}{0.06, 0.10, 0.98} %
\def\lstsmallmath{\leavevmode\ifmmode \scriptstyle \else  \fi}
\def\lstsmallmathend{\leavevmode\ifmmode  \else  \fi}
\definecolor{KWColor}{rgb}{0.5,0,0.67}
\definecolor{CommentColor}{rgb}{0.15,0.5,0.15}
\definecolor{lightgrey}{rgb}{0.8,0.8,0.8}

\lstset{breaklines=true}
\lstset{basicstyle=\small\ttfamily}
\lstset{framesep=4pt}

\lstdefinelanguage{JavaScript}[]{Java}{
   morekeywords={var,class,object,function,undefined} 
}

\lstdefinestyle{Eclipse}{
  xleftmargin=0pt,
  language = JavaScript,
  basicstyle=\ttfamily\scriptsize,
  stringstyle=\color{sh_string},
  keywordstyle = \color{sh_keyword}\bfseries,  %
  lineskip=-0.0em,
  commentstyle=\color{sh_comment}\itshape,  
  escapeinside={/*@}{@*/},
  numbersep=5pt,
  captionpos=b,
  xleftmargin=0.4cm, xrightmargin=0.5cm,
   morekeywords={invokestatic,invokeinterface,invokevirtual,invokespecial},
}

\lstset{
  showspaces=false,showtabs=false,tabsize=2,columns=flexible,
  morekeywords={var,function},
  language={JavaScript},
  style=Eclipse,
  numbers=left,
  numberstyle=\scriptsize\color{CommentColor},
  firstnumber=last,
  showstringspaces=false
    basicstyle=\scriptsize\ttfamily,
    stepnumber=1,                  
    numbersep=5pt,                 
    backgroundcolor=\color{white}, 
    captionpos=b,       
    breaklines=true,    
    breakatwhitespace=false, 
    escapeinside={/*@}{@*/}}

\newcommand{\code}[1]{\lstinline[basicstyle=\small\ttfamily]{#1}\xspace}
\newcommand{\Edge}[2]{\mbox{#1}$\rightarrow$\mbox{#2}}

\newcommand{\register}{\textsf{\small{register}}}
\newcommand{\emit}{\textsf{\small{emit}}}
\newcommand{\invoke}{\textsf{\small{invoke}}}
\newcommand{\identity}{\textsf{\small{id}}}

\newcommand{\js}{Java\-Script\xspace}
\newcommand{\eg}{\emph{e.g.}\xspace}
\newcommand{\ie}{\emph{i.e.}\xspace}
\newcommand{\Borges}{\textsc{Borges}\xspace}
\newcommand{\microfun}[4]{\ensuremath{\big\langle #1,\allowbreak #2,\allowbreak #3,\allowbreak #4 \big\rangle}\xspace}

\newcommand{\Gstar}{\ensuremath{G^{*}}}
\newcommand{\Nstar}{\ensuremath{N^{*}}}
\newcommand{\Estar}{\ensuremath{E^{*}}}
\newcommand{\Ghash}{\ensuremath{G^{\#}}}
\newcommand{\Nhash}{\ensuremath{N^{\#}}}
\newcommand{\Ehash}{\ensuremath{E^{\#}}}
\newcommand{\pair}[2]{\ensuremath{\langle #1, #2 \rangle}}
\newcommand{\Zero}{\ensuremath{\textrm{\textbf{0}}}}
\newcommand{\startmain}{\ensuremath{\textsf{start}_\textsf{main}}}
\newcommand{\MF}{\ensuremath{M_{F}}}
\newcommand{\MEnv}{\ensuremath{M_\textsf{Env}}}
\newcommand{\TopEnv}{\ensuremath{\top_\textsf{Env}}}
\newcommand{\SEnv}{\ensuremath{S_\textsf{Env}}}
\newcommand{\VP}{\ensuremath{\textsf{VP}}}
\newcommand{\EdgeFn}{\ensuremath{\textsf{EdgeFn}}}
\newcommand{\env}{\ensuremath{\textsf{env}}}
\newcommand{\Transform}{\ensuremath{\mathcal{T}}}
\newcommand{\Untransform}{\ensuremath{\mathcal{U}}}
\newcommand{\SolveIFDS}{\ensuremath{\textsf{MVP}_\textsf{IFDS}}}
\newcommand{\SolveIDE}{\ensuremath{\textsf{MVP}_\textsf{IDE}}}

\definecolor[named]{ACMPurple}{cmyk}{0.55,1,0,0.15}
\definecolor[named]{ACMDarkBlue}{cmyk}{1,0.58,0,0.21}

\hypersetup{
  bookmarks=true,
  unicode=true,
  pdftitle={Precise Dataflow Analysis of Event-Driven Applications},
  pdfauthor={Ming-Ho Yee,
             Ayaz Badouraly,
             Ond\v{r}ej Lhot\'{a}k,
             Frank Tip,
             Jan Vitek},
  pdfkeywords={static analysis,
               interprocedural analysis,
               asynchronous,
               event-driven},
  colorlinks=true,
  linkcolor=ACMPurple,
  citecolor=ACMPurple,
  urlcolor=ACMDarkBlue,
  filecolor=ACMDarkBlue
}

\setcounter{tocdepth}{2}

\begin{document}

\title{Precise Dataflow Analysis of Event-Driven Applications}

\author{Ming-Ho Yee,\inst{1}
Ayaz Badouraly,\inst{2}
Ond\v{r}ej Lhot\'{a}k,\inst{3}
Frank Tip,\inst{1}
Jan Vitek\inst{1,4}}
\authorrunning{M.H. Yee, A. Badouraly, O. Lhot\'{a}k, F. Tip, J. Vitek}
\institute{Northeastern University \and
CentraleSup\'{e}lec \and
University of Waterloo \and
Czech Technical University}

\maketitle

\vspace{-2em}
\begin{abstract}

Event-driven programming is widely used for implementing user interfaces, web applications, 
and non-blocking I/O. An event-driven program is organized as a collection of event handlers 
whose execution is triggered by events. 
Traditional static analysis techniques are unable to reason precisely about event-driven code
because they conservatively assume that event handlers may execute in any order. 
This paper proposes an automatic transformation from Interprocedural Finite
Distributive Subset~(IFDS) problems to Interprocedural Distributed Environment~(IDE)
problems as a general solution to obtain precise static analysis of
event-driven applications; problems in both forms can be solved by existing
implementations. Our contribution is to show how to improve analysis precision by
automatically enriching the former with information about the state of event handlers to 
filter out infeasible paths. We prove the
correctness of our transformation and report on experiments with a proof-of-concept
implementation for a subset of \js.

\keywords{static analysis \and interprocedural analysis \and asynchronous \and event-driven.}

\end{abstract}
 \vspace{-2em}
\section{Introduction}\label{sec:intro} 

Event-driven programming is a popular paradigm in which control
flow follows the order of events.
The essence of the paradigm is the flexible association
between user-defined event handlers and events, such as
user interface or operating system actions.
When an event is emitted, all 
event handlers that have been registered for it are eligible to be invoked by
the event loop.

Flexibility comes from the fact that event handlers are invoked
asynchronously.
This asynchrony causes complexity in reasoning about event-driven
programs in the presence of mutable state: consider the example of
a global variable initialized by one event handler and used by
another. The order in which the event handlers are invoked is
critical for correctness, but the ordering constraints are not explicit;
responsibility for the ordering is imposed on the programmer.

To reason about event-driven programs, a static analysis must
model the execution of the event loop. A conservative---but imprecise---approach
is to assume that any handler can be invoked in any order, ignoring any run-time
constraints. Work by Madsen et
al.~\cite{Madsen15} avoids such imprecision by using a notion of
context sensitivity in which a context abstracts the set of
event handlers registered and the set of events emitted.
The resulting context-sensitive call graphs can distinguish,
\eg, program states where no events
have been emitted and program states where an event has been emitted,
resulting in a more precise analysis of event-driven
programs. Unfortunately, the number of contexts is exponential in the size of the program,
so the analysis does not scale.

We propose a technique to write static analysis algorithms without considering
the ordering of events and registrations, and then translate them automatically
into algorithms that filter out infeasible paths.
We leverage two established static analysis frameworks, the
\emph{Interprocedural Finite Distributive Subset}~(IFDS) framework
introduced by Reps et al.~\cite{Reps95} and the \emph{Interprocedural
  Distributive Environment}~(IDE) framework of Sagiv et
al.~\cite{Sagiv96}. These frameworks have been used on a variety of
practical problems, including taint
analysis~\cite{DBLP:conf/pldi/ArztRFBBKTOM14}, and a number of solvers are
available~\cite{DBLP:conf/pldi/ArztRFBBKTOM14,Heros,WALA,Madsen16}.

The IFDS framework solves \emph{interprocedural} dataflow problems
whose domain consists of \emph{subsets} of a \emph{finite} set $D$, and
whose dataflow functions are \emph{distributive}, and it computes
a meet-over-valid-paths solution in polynomial time.
Any static analysis
that can be expressed in this framework is a candidate for our
approach.
Unfortunately, IFDS cannot enforce constraints on the execution order
of event handlers.  To overcome this limitation, our approach automatically
translates an arbitrary IFDS analysis into an IDE analysis.

The IDE framework generalizes IFDS by using \emph{environments} as dataflow
facts, \ie, maps from some finite set $D$ to some
lattice of values $L$, and \emph{distributive} \emph{environment transformers} as dataflow
functions.  Like IFDS, IDE problems can be solved efficiently.
If the IFDS
algorithm computes facts in $D$ that hold along interprocedurally valid
paths, then the IDE algorithm computes values from $L$ along those paths.
Our approach associates dataflow functions to edges
associated with events and event handlers,
so that the composed transfer functions
filter out dataflow facts reachable only along infeasible paths.

Our main contribution is an automated transformation from IFDS into
IDE problems, such that the IDE result solves the original
IFDS problem but avoids
imprecision due to infeasible paths.
We prove our transformation sound and precise.
We demonstrate a proof-of-concept tool called
\Borges,
which is capable of
analyzing small programs in a subset of \js that use event-driven
programming.
We report on three case studies 
on small Node.js
programs that use events for asynchronous file I/O, timers, and network
I/O. We demonstrate precision improvements
in an IFDS-based possibly uninitialized variables dataflow analysis.
Our technique is applicable to other frameworks and languages.
\section{Motivating Examples}\label{sec:motivation}

\begin{figure}[!t]
  \centering
  \begin{minipage}[c]{0.51\textwidth}
    {\scriptsize \lstinputlisting{figs/dirstat.js}}
  \end{minipage}
  \begin{minipage}[c]{0.48\textwidth}
    \includegraphics[width=\textwidth,trim={0.8cm 0.3cm 0.8cm 0.5cm},clip]{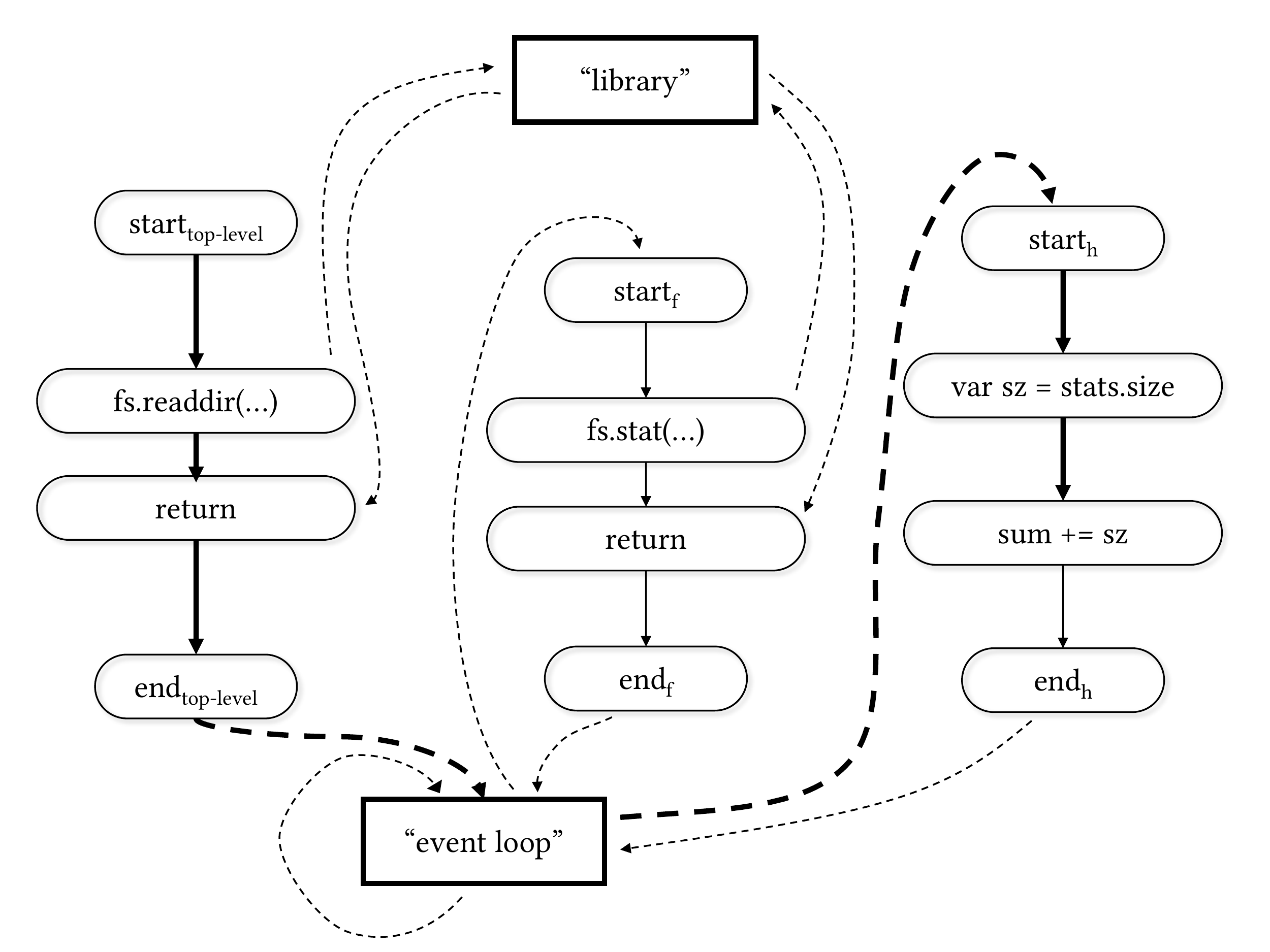}
  \end{minipage}
  \vspace{-5pt}
  \caption{Example application \code{dirstat.js} and its control-flow
    supergraph. Interprocedural edges are dashed; an infeasible path is shown in
    bold.
  We treat top-level code as if it occurs inside a function
\code{top-level}. To avoid clutter, library code is represented using a single
node labeled ``library'' and further details that have no bearing on the topic
have been elided.
  }\label{fig:dirstat}
  \vspace{-10pt}
\end{figure}

\Cref{fig:dirstat} shows an event-driven \js application that uses the
Node.js \code{fs} (File System)
module.
Running the application
prints the names and sizes of the files in the current
directory, as well as a running sum of their sizes.

We briefly discuss the workings of the application. First, the \code{fs} module
is loaded (line~\ref{line:RequireFs}), making various file-related operations
available as methods on an object assigned to variable~\code{fs}. Next,
variable \code{sum} is declared, but not initialized
(line~\ref{line:declare_sum}). Line~\ref{line:readdir} calls \code{readdir} to read
the contents of the current directory, with two
arguments: a path to the directory that is to be read and a callback function, \code{f}.
\code{f} is asynchronously invoked with two arguments,
\code{err} and \code{files}, where \code{err} is either \code{null} or \code{undefined}
if the operation completes successfully or an error object otherwise, and
\code{files} is an array containing the names of the files in the directory.

When \code{f} is invoked, it checks if an error occurred
(line~\ref{line:checkForErrors1}). If not, it initializes \code{sum} to \code{0}
(line~\ref{line:initializeSum}), and uses the built-in
\code{forEach} function to iterate through all names in array \code{files}
(line~\ref{line:ForEach}).  \code{forEach} takes a callback,
\code{g}, that is invoked synchronously for each array element, binding it to variable \code{file}. 
For each file name, the function \code{stat} is invoked to access some properties of that file
(line~\ref{line:stat}). The second argument passed to \code{stat} is a callback,
\code{h}, that is asynchronously invoked with two arguments, \code{err} and
\code{stats}, where \code{stats} is an object containing information about
the current file. When \code{h} is invoked, it retrieves the
size of this file, stores it in variable \code{sz}
(line~\ref{line:readFileSize}), and adds it to \code{sum}
(line~\ref{line:addToSum}). Then, it prints information about the current file
(lines~\ref{line:FsPrint1}--\ref{line:FsPrint2}). Lastly, the application prints `\code{done}'
(line~\ref{line:FsDone}).

\medskip\noindent\textbf{Execution behavior.}
Executing the program in a directory
containing, in addition to the script itself, a file \code{f1} of size 100 and a
file \code{f2} of size 50, prints:
{\small
\begin{verbatim}
done\n dirstat.js 428\n sum 428\n f1 100\n sum 528\n f2 50\n sum 578\n
\end{verbatim}
}
\noindent
Note that `\code{done}' is printed first, because the callback \code{f} 
registered by \code{readdir}  does not execute until after the top-level
code has finished executing.

\medskip\noindent\textbf{Representing asynchronous control flow.}
The callbacks passed to
\texttt{\small readdir}
and \code{stat} are invoked \textit{asynchronously}. Since \js's execution model is
single-threaded and non-preemptive, these functions will not 
execute until the current callback has finished executing.
\Cref{fig:dirstat}
shows the interprocedural control flow graph~(ICFG) for the application.
An ICFG (also known as a \emph{supergraph} in the IFDS literature) contains a subgraph for each function in the application,
 with nodes for all expressions in the function and edges reflecting
possible control flow between them. Each such subgraph contains distinct ``start'' and ``end''
nodes representing the function's entry and exit points. Edges between subgraphs
represent interprocedural control flow between functions due to calls and returns. 
Asynchronous control flow is modeled by way of a special ``event loop'' node. 
Edges connect each function's end node to the event loop node, reflecting
that control returns to the event loop when a function at the top of the call stack finishes executing.
Edges connect the ``event loop'' node to the ``start'' node for each asynchronously
invoked function. Thus, in \cref{fig:dirstat}, there are edges from
``event loop'' to the start nodes for \code{f} and \code{h}. 
 
\medskip\noindent\textbf{Static analysis.}
Suppose that we want to perform a dataflow analysis to determine potentially uninitialized variables. 
This problem can be expressed in terms of a domain consisting of \emph{subsets}
of a \emph{finite} set $D$ (in this example, the set of possibly uninitialized
variables),
and using dataflow functions that are \emph{distributive}, so a meet-over-valid-paths solution 
can be computed in polynomial time using the IFDS framework~\cite{Reps95}. 
The defining characteristic of IFDS is that it avoids imprecision that would arise from considering
data flow along control-flow paths in which function calls and function returns are not matched up properly. 

However, suppose the analysis considers the control-flow path shown in bold in
\cref{fig:dirstat}, where execution of top-level code is followed by execution
of \code{h}, without ever calling \code{f}.
On this path,  \code{sum} is referenced on line~\ref{line:addToSum} without 
having been initialized, so a traditional IFDS-based analysis will report that \code{sum} is possibly uninitialized 
on line~\ref{line:addToSum}. In reality, this path is \emph{infeasible} because \code{h} cannot be invoked asynchronously 
before being registered during execution of \code{f}.
Furthermore,  since \code{f} initializes \code{sum} and registers callback \code{h} (recall that
\code{g} is invoked synchronously by \code{forEach}), and \code{h} cannot be
invoked until after \code{f} has finished executing, \code{sum} is guaranteed to be initialized when \code{h} executes.

This paper presents a technique for improving the precision of IFDS-based analyses by taking into account 
the order in which callbacks can execute. Our approach involves transforming the original IFDS problem 
into an IDE problem~\cite{Sagiv96}
by associating dataflow functions with edges corresponding to \emph{event handler registration} and 
\emph{event handler invocation}. The transfer function obtained by composing the functions along a control-flow path 
reflects that path's feasibility, thus effectively ``filtering out'' dataflow facts if the path is infeasible.   

\begin{figure}[!t]
  \centering
  \begin{minipage}[c]{0.42\textwidth}
    {\scriptsize \lstinputlisting{figs/doors.js}}
  \end{minipage}
  \begin{minipage}[c]{0.57\textwidth}
    \includegraphics[width=0.9\textwidth,trim={1cm 1.5cm 0.7cm 2.4cm},clip]{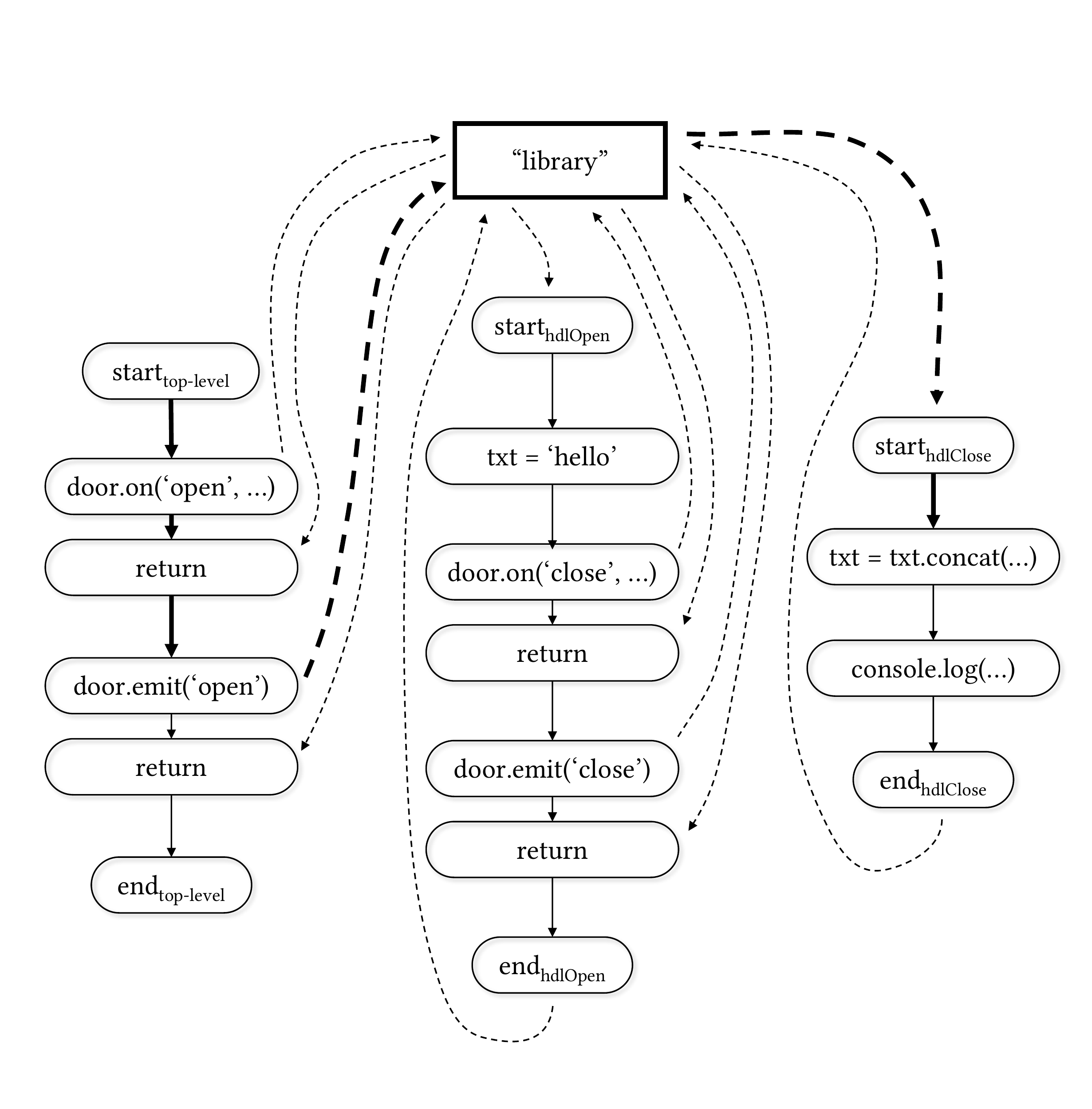}
  \end{minipage}
  \caption{An example illustrating an infeasible path during analysis.
  In the supergraph, interprocedural edges are dashed and an infeasible path is
shown in bold.}\label{fig:DoorsExample}
  \vspace{-10pt}
\end{figure}

\medskip\noindent\textbf{Explicit emission of events.}
\Cref{fig:DoorsExample} illustrates a more complex scenario where the \code{EventEmitter} class of the Node.js \code{events} 
package
is used to model a door that responds to \code{open} and \code{close} events.
On line~\ref{line:RegisterHandleOpen}, function \code{hdlOpen} is registered to handle the \code{open} event on \code{door}, and on
line~\ref{line:RegisterHandleClose}, \code{hdlClose} is registered to handle the \code{close} event.
To trigger event handlers, an event must be emitted using the \code{emit} method.

We consider the program's execution behavior.
After loading the \code{events} package (line~\ref{line:Start}), the program creates a door (line~\ref{line:CreateDoor})
and declares variable \code{txt} (line~\ref{line:DeclareText}).
The call \code{door.on(...)}  (line~\ref{line:RegisterHandleOpen}) associates
\code{hdlOpen} with the \code{open} event.
Calling \code{emit} triggers \code{hdlOpen},%
\footnote{
 \js is single-threaded and non-preemptive. \code{emit} yields control to the event loop, which invokes 
  the associated handler, and control returns to the caller of \code{emit}.} 
which, when it executes, initializes \code{txt} to \code{'Hello'} (line~\ref{line:AssignToText}) and
 associates \code{hdlClose} with the \code{close} event (line~\ref{line:RegisterHandleClose}).
Line~\ref{line:EmitClose} emits the \code{close} event, triggering its handler,
\code{hdlClose}, which,
when it executes,  updates \code{txt} (line~\ref{line:CallConcat}) and prints its value \code{'Hello, world!'}.
Note that \code{hdlClose} must execute \emph{after} \code{hdlOpen}, because it responds only to the \code{close} event, 
which is emitted in the body of \code{hdlOpen}.

In the ICFG, several call sites invoke library functions such as
\code{on} and \code{emit}, while the library invokes \code{hdlOpen} and
\code{hdlClose}.
No ordering exists between the \Edge{\code{library}}{\code{hdlOpen}} and
\Edge{\code{library}}{\code{hdlClose}} edges, 
so a traditional analysis assumes that these event handlers may execute in an arbitrary order. In particular, the path shown in bold 
is admitted, but it is infeasible because it entails \code{hdlClose} executing before \code{close} is emitted.

To understand the impact of imprecision, we again consider an analysis that looks for uninitialized variables.
If the analysis considers the infeasible path, it concludes that
\code{txt.concat(...)} may take place at a time when \code{txt} is uninitialized.
This is a false positive because it is impossible for \code{hdlClose} to execute before being registered or before the \code{close} event is emitted.
 
For this example, we would like to rule out the path marked in bold by tracking 
three operations associated with each event handler:
 (i) when an event handler is \emph{registered} for an event,
 (ii) when the event is \emph{emitted}, and
 (iii) when the event handler is \emph{invoked}.
Infeasible paths will be filtered out if operation (i) does not happen before operation (iii), and if operation 
(ii) does not happen before operation (iii). 
To do so, we will determine the possible sequences of these operations associated with each dataflow fact, and 
filter out those dataflow facts associated with infeasible sequences.
Note that in the file system example discussed previously, \code{emit} operations are not explicitly present in the application source code, so it can
be viewed as a special case of the more general scenario discussed here.

\section{Background}\label{sec:background}

Our technique takes as input an instance of the IFDS framework and
outputs an instance of the IDE framework. In this section, we provide
some background about these frameworks.

\medskip\noindent\textbf{IFDS background.}
The IFDS framework~\cite{Reps95} is applicable to \emph{interprocedural}
dataflow problems whose domain consists of \emph{subsets} of a \emph{finite} set
$D$, and whose dataflow functions are \emph{distributive} (\ie,
$f$ is distributive if and only if $f(x_1 \sqcap x_2) = f(x_1) \sqcap f(x_2)$).
It has proven to be sufficiently expressive and efficient to accommodate
classical dataflow problems such as the possibly uninitialized variables problem
illustrated in \cref{fig:DoorsExample}, but also more complex problems such
as taint analysis~\cite{DBLP:conf/pldi/ArztRFBBKTOM14} and typestate
analysis~\cite{Fink:2008:ETV:1348250.1348255,DBLP:conf/oopsla/NaeemL08}.

An IFDS problem instance $P$ is defined as $\langle \Gstar, D, F, \MF,
\sqcap \rangle$, where:

\begin{enumerate}
  \item $\Gstar = \pair{\Nstar}{\Estar}$ is the ICFG
    of the input program, called the supergraph;
  \item $D$ is a finite set of dataflow facts;
  \item $F \subseteq 2^{D} \to 2^{D}$ is a set of distributive dataflow
    functions;
  \item $\MF : \Estar \to F$ maps supergraph edges to dataflow functions;
    and
  \item $\sqcap$ is the meet operator on the powerset $2^D$ (either union or
    intersection).
\end{enumerate}

\noindent The IFDS framework computes in polynomial time the
meet-over-valid-paths solution,%
\footnote{Following the IFDS and IDE literature, throughout
  this paper, we use the lattice \emph{meet} operation, $\sqcap$, to merge
  dataflow facts when control-flow paths merge. Thus the top element,~$\top$, of
  a lattice represents an unreachable state and the bottom element,~$\bot$,
  means that all concrete states are possible.}
\mbox{$\SolveIFDS : \Nstar \to 2^D$}, of the
dataflow constraints, where each node~\mbox{$n \in \Nstar$} is mapped to a set
of dataflow facts.
A valid path respects the fact
  that, when a function finishes executing, it returns to the call site from
  where it was invoked. $\VP(n)$ denotes the set of all valid paths
  from the start of the program to node $n$. Formally, the meet-over-valid-paths
  solution is defined as
\begin{equation*}\label{eqn:solve-ifds}
  \SolveIFDS(P) = \lambda n . \!\bigsqcap_{p \,\in\, \VP(n)} \!\!\MF(p)(\varnothing)
\end{equation*}
where $\MF$ is extended to paths so that $\MF([e_1 \ldots e_k]) =
\MF(e_k) \circ \cdots \circ \MF(e_1) \circ \textsf{id}$.

\begin{figure}[!t]
  \centering
  \begin{minipage}[b]{0.48\textwidth}
    \begin{align*}
      f = \lambda S . & \textbf{ if }   y \in S \vee z \in S \\
                      & \textbf{ then } S \cup \{ x \} \\
                      & \textbf{ else } S \setminus \{ x \}
    \end{align*}
    \scriptsize
    $R_f = \big\{ \pair{\Zero}{\Zero}, \pair{y}{x}, \pair{y}{y}, \pair{z}{x},
    \pair{z}{z} \big\}$\\
  \end{minipage}
  ~ %
  \begin{minipage}[b]{0.48\textwidth}
    \centering
    \includegraphics[width=0.7\textwidth,trim={4cm 2.7cm 4cm 3.5cm},
    clip]{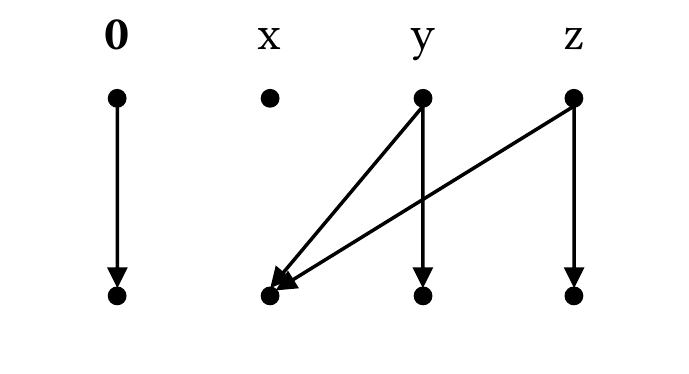}
  \end{minipage}
  \vspace{-10pt}
  \caption{
    Representing the effect of \code{x = y + z} for the
    possibly uninitialized variables analysis. The dataflow function is
    \emph{distributive}, as only one input needs to be considered at a time:
    \code{x} is possibly uninitialized if \code{y} \emph{or} \code{z} are
    possibly uninitialized.}\label{fig:RepresentationRelation}
    \vspace{-10pt}
\end{figure}

The key insight 
behind the IFDS algorithm is that any distributive function $f : 2^D \to
2^D$ can be represented as a bipartite graph with $2(D + 1)$ nodes, with edges
from one instance of $D \cup \{\Zero\}$ to another instance of $D \cup
\{\Zero\}$; \cref{fig:RepresentationRelation} illustrates an example.
Formally, the \emph{representation relation}, $R_f \subseteq (D \cup
\{\Zero\}) \times (D \cup \{\Zero\})$, of a distributive function $f : 2^D \to
2^D$, is defined as follows:
\begin{equation*}\label{eqn:rep-rel}
  R_f = \{ \pair{\Zero}{\Zero} \} \cup
        \{ \pair{\Zero}{d} \,|\, d \in f(\varnothing) \} \cup
        \{ \pair{d_1}{d_2} \,|\, d_2 \in f(\{d_1\}) \wedge d_2 \notin f(\varnothing) \}.
\end{equation*}
The edges of the representation relation are sufficient to uniquely
determine~$f(D_0)$ for any subset $D_0 \subseteq D$, since by distributivity
\mbox{$f(D_0) = f(\varnothing) \sqcap \bigsqcap_{d \in D_0} f(\{d\})$}.
Also, the meet and composition of two distributive
functions $f, g \in 2^D \to 2^D$ can be computed and represented as bipartite
graphs, as shown in \cref{fig:RepRelMeetCompose}:
\begin{align*}
  R_{f \sqcap g} &= \{ \pair{x}{y} \,|\, \pair{x}{y} \in
                       R_f \cup R_g \} \\
  R_{g \circ f}  &= \{ \pair{x}{z} \,|\,
                       \exists y \in D \cup \{\Zero\} \,.\,
                       \pair{x}{y} \in R_f \wedge
                       \pair{y}{z} \in R_g \}.
\end{align*}

IFDS represents a given problem instance $P = \langle \Gstar, D,
F, \MF, \sqcap \rangle$ as an \emph{exploded supergraph}, $\Ghash_{P} =
\pair{\Nhash}{\Ehash}$, where:

\begin{enumerate}
  \item $\Nhash = \Nstar \times (D \cup \{\Zero\})$, and
  \item $\Ehash = \{ \pair{m}{d_1} \to \pair{n}{d_2} \,|\,
    \pair{m}{n} \in \Estar \wedge \pair{d_1}{d_2} \in R_{\MF(m
    \to n)} \}$.
\end{enumerate}

\noindent In essence, each node $n \in \Nstar$ of the supergraph has been
``exploded'' into a set of nodes $\pair{n}{d}$, where each $d$ is a dataflow
fact (or~$\Zero$), and each edge $e \in \Estar$ becomes the set of edges from
the representation relation $R_{\MF(e)}$, where $\MF(e)$ is the dataflow
function assigned to $e$. In this graph, a node $\pair{n}{d}$ is reachable from
the start node $\pair{\startmain}{\Zero}$ if and only if fact $d$
holds at statement $n$.

The algorithm works by iteratively composing a dataflow function for an existing
control-flow path with the dataflow function for an additional instruction, thus
yielding a dataflow function for a longer path. Once a
path covers an entire procedure, its dataflow function becomes a summary
function for the procedure and is used to model the effect of the procedure at
its call sites.

\begin{figure}[!t]
  \centering
  \begin{minipage}[b]{0.485\textwidth}
    \includegraphics[width=\textwidth,trim={0.6cm 1.5cm 0.8cm 1.5cm},
    clip]{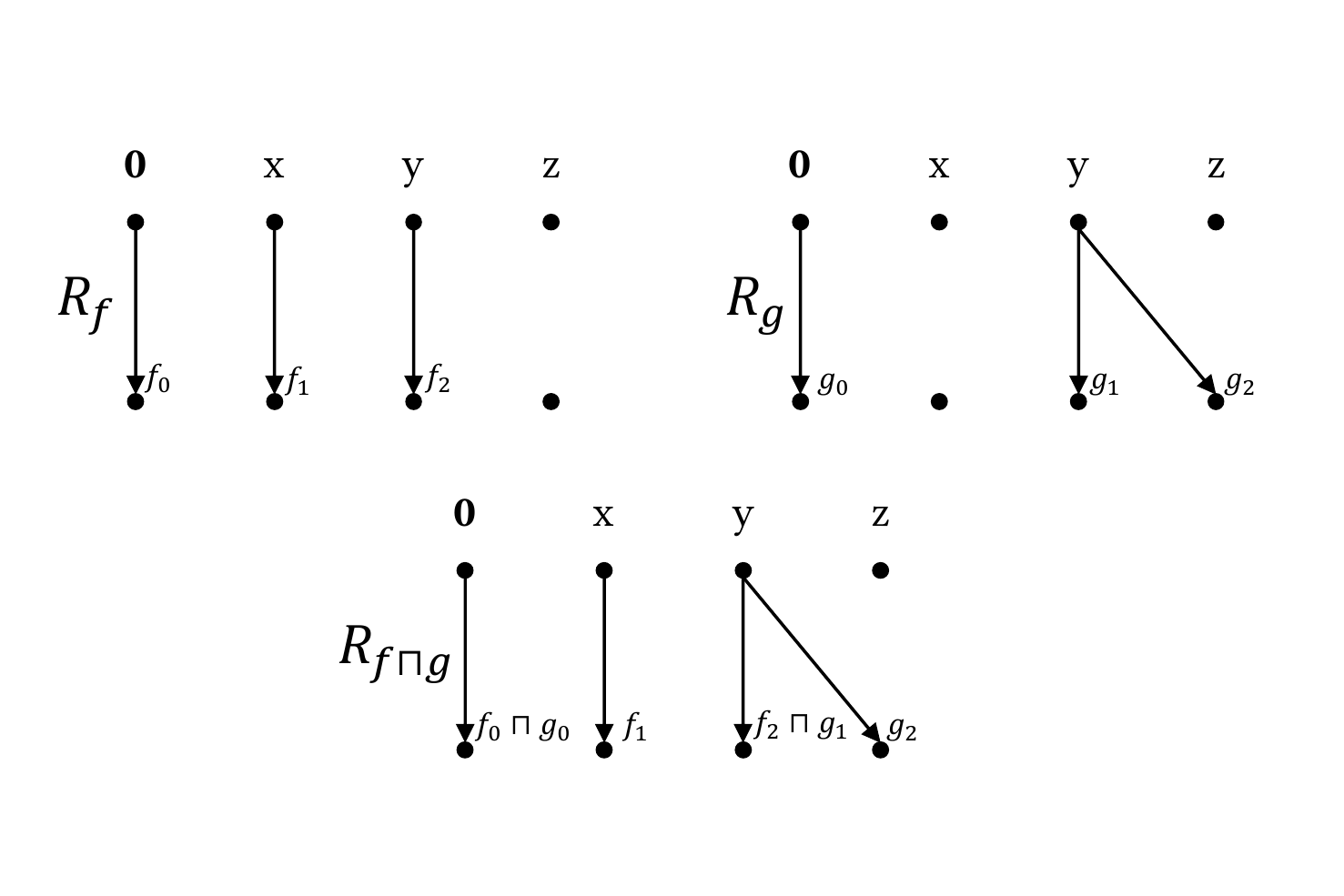}
  \end{minipage}
  ~ %
  \begin{minipage}[b]{0.485\textwidth}
    \includegraphics[width=\textwidth,trim={0.6cm 1.5cm 0cm 1.5cm},
    clip]{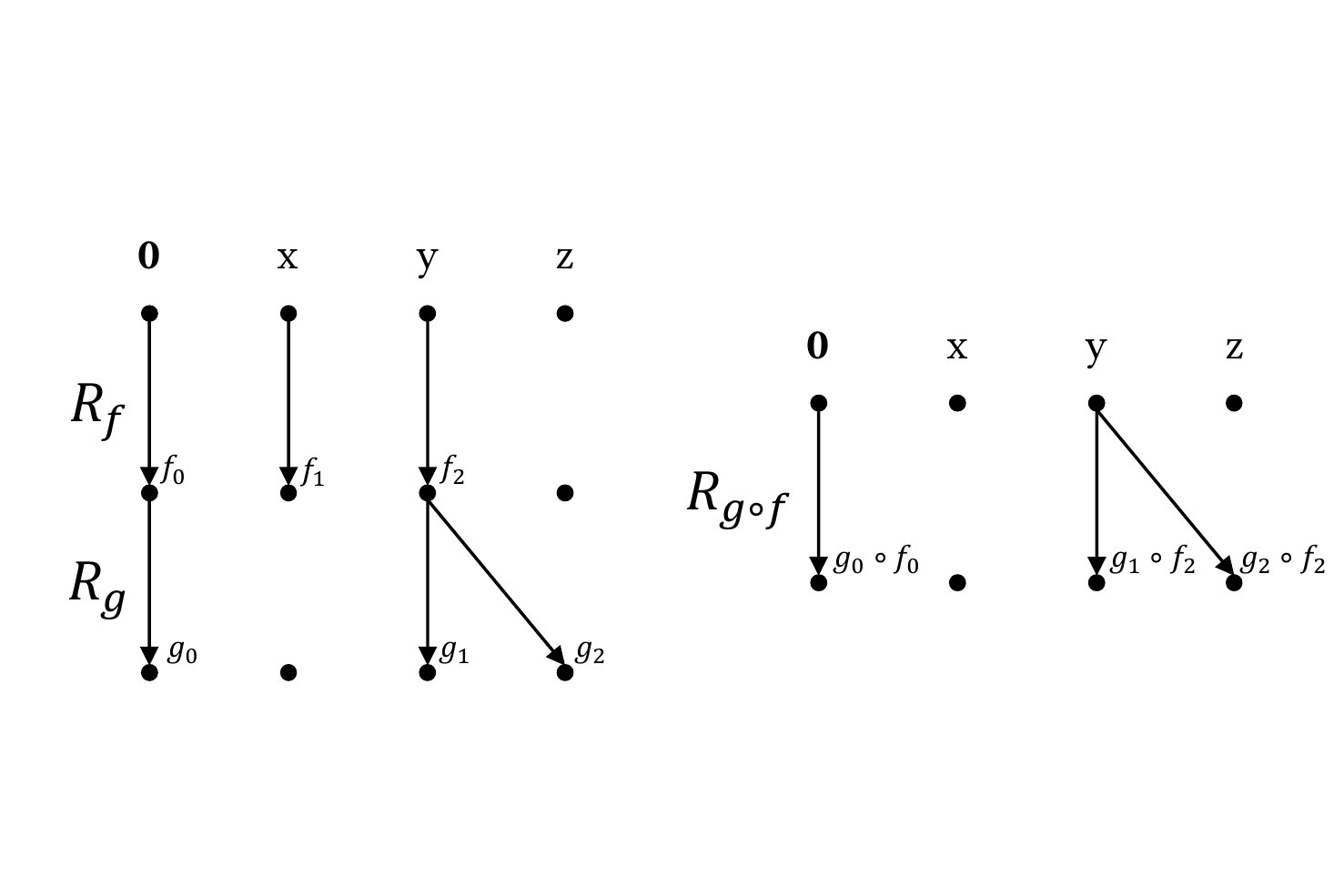}
  \end{minipage}
  \caption{Representation relations are closed under meet (left) and composition
    (right).
    In IFDS, the edges of the bipartite graphs are unlabeled; in IDE, they are
  labeled with micro-functions.}\label{fig:RepRelMeetCompose}
    \vspace{-10pt}
\end{figure}

\medskip

\noindent 
As discussed informally in \cref{sec:motivation},
we can encode event handling in the supergraph by modeling an event
loop that nondeterministically calls all event handlers. Such an encoding is
sound but imprecise, because it ignores the order in which event handlers are
called and admits infeasible paths that include handling of events before the
handler has been registered or the event has been emitted.

\medskip\noindent\textbf{IDE background.}
The IDE framework~\cite{Sagiv96} generalizes IFDS to interprocedural
distributive environment problems, in which dataflow facts are
  \emph{environments}, \ie, maps in $D \to L$ from a finite set~$D$ to a
  finite-height lattice~$L$, and dataflow functions are \emph{environment
  transformers} in~\mbox{$(D \to L) \to (D \to L)$} that \emph{distribute} over the
  meet operator of the map lattice $D \to L$. In other words, environments are
  values from the map lattice $D \to L$, which is lifted from the lattice $L$:
  the top element is $\TopEnv = \lambda d . \top$ where $\top$ is the top
  element of $L$, and for two environments $\env_1, \env_2$ in $D \to L$,
  \mbox{$\env_1 \sqcap \env_2 = \lambda d . ( \env_1(d) \sqcap \env_2(d) )$}.

Formally, an IDE problem instance is defined as $P = \langle \Gstar, D, L, \MEnv
\rangle$, where:

\begin{enumerate}
  \item $\Gstar = \pair{\Nstar}{\Estar}$ is the supergraph of the input
    program;
  \item $D$ is a finite set of program symbols, \eg, variables;
  \item $L$ is a finite-height lattice with top element $\top$; and
  \item $\MEnv : \Estar \to ((D \to L) \to (D \to L))$ is a function that assigns
    environment transformers to supergraph edges.
\end{enumerate}

\noindent
IDE computes the meet-over-valid-paths, \mbox{$\SolveIDE : \Nstar
\to (D \to L)$}, of the environment transformers, similar to IFDS. At each node
$n$ in the supergraph, IFDS computes only the presence or absence of each
element $d$ of the dataflow domain; however, IDE computes for each $d$ an
element $l$ of the lattice $L$. Thus, IFDS is a special case of IDE in which $L$
is fixed to be the two-point lattice, with $\top$ indicating absence and $\bot$
indicating presence of $d$. Intuitively, one can think of the IDE algorithm as
computing facts in $D$ that hold along interprocedurally valid paths while
simultaneously propagating and computing values from $L$ along those paths.
Formally, the meet-over-valid-paths solution is defined as
\begin{equation*}\label{eqn:solve-ide}
  \SolveIDE(P) = \lambda n . \!\bigsqcap_{p \in\, \VP(n)} \!\!\MEnv(p)(\TopEnv)
\end{equation*}
where $\MEnv$ is extended so that $\MEnv([e_1 \ldots e_k]) =
\MEnv(e_k) \circ \cdots \circ \MEnv(e_1) \circ \textsf{id}$.

An IDE dataflow function in $(D \to L) \to (D \to L)$, \ie, a distributive
environment transformer, can be encoded as a \emph{pointwise representation},
using a bipartite graph with $2(D + 1)$ nodes. The nodes are the same as in an
IFDS representation relation, but each edge $d' \to d$ is labeled by
$f_{d',d}$, a function in $L \to L$ called a \emph{micro-function}. By
distributivity, such a set of micro-functions is sufficient to represent an
environment transformer $t$, since \mbox{$t(\env)(d) = f_{\Zero,d}(\top) \sqcap
  \bigsqcap_{d' \in D} f_{d',d}(\env(d'))$}.

Pointwise representations are also closed under meet and composition, as shown
in \cref{fig:RepRelMeetCompose}. The meet
of two representations $R_{f}$ and $R_{g}$ is the union of edges
of $R_{f}$ and $R_{g}$, where the micro-function for a shared edge in
$R_{f \sqcap g}$ is the meet of the two micro-functions of that edge in
$R_{f}$ and $R_{g}$. The composition of two representations is
computed by connecting the two graphs and composing micro-functions along paths
in the resulting graph. Therefore, an instantiation of the IDE framework
requires an efficient representation of micro-functions as well as an efficient
implementation of their composition, meet, and equality test.

The IDE algorithm represents a given problem instance as a \emph{labeled
exploded supergraph} $\Ghash_{P} = \pair{\Nhash}{\Ehash}$, with each edge
$\pair{m}{d_1} \to \pair{n}{d_2}$ labeled by a micro-function
$f : L \to L$. The labels are given by a function $\EdgeFn : \Ehash \to (L \to
L)$. To compute the meet-over-valid-paths solution over the labeled exploded
supergraph, the IDE algorithm requires two phases. The first phase is similar to
IFDS, iteratively composing bipartite graphs for control-flow paths of
increasing length; this determines which nodes $\pair{n}{d}$ are
reachable. The second phase applies the composed micro-functions to determine,
for each node $\pair{n}{d}$, the value $l \in L$ that $d$ is mapped to.

\medskip

\noindent In our approach, we take the IFDS exploded supergraph as input and
produce an IDE labeled exploded supergraph by assigning micro-functions to
exploded supergraph edges. For a program with a single event handler, we use the
lattice $L$ to keep track of the event handler registrations and event emissions
that have taken place on each control-flow path. To support multiple event
handlers, we use the map lattice $H \to L$, where $H$ is the set of event
handlers in the program and $L$ is the lattice for a single event handler. This
allows us to track the registration and event emission for each event handler in
the program.

\section{Technique}\label{sec:technique}

\noindent Our technique is a transformation $\Transform : \Ghash \to
\pair{\Ghash}{\EdgeFn}$ of an arbitrary instance of the IFDS
analysis framework into an instance of the IDE analysis framework. The IDE
solution encodes the same dataflow facts as the IFDS solution, except that it
excludes dataflow facts reachable only along infeasible paths.

The input to our technique, an instance of the IFDS framework, is expressed as
an exploded supergraph $\Ghash$, which encodes the ICFG
of the program under analysis, the dataflow analysis, and the transfer
functions for that analysis. The output of our technique, an instance of the IDE
framework, is a labeled exploded supergraph $\pair{\Ghash}{\EdgeFn}$ where
$\EdgeFn$ assigns micro-functions in $L \to L$ to each edge of the exploded
supergraph.

The key idea of our transformation is to augment the exploded supergraph with an
encoding of event handler operations. We do this by encoding event
handler operations as micro-functions on the edges of the exploded supergraph.
Our technique does not change the nodes or edges of the exploded
supergraph; it only assigns micro-functions to the edges of that graph.
Therefore, it does not change the ICFG, the base
dataflow analysis, or its transfer functions.

Intuitively, an IFDS analysis asks which elements $d \in D$ are present at node
$n$ of the supergraph, while an IDE analysis asks what lattice
value $l \in L$ is associated with element $d \in D$ at node $n$. In our
technique, the lattice $L$ encodes event handler state: if an element $d$ at
node $n$ maps to an infeasible event handler state, then we conclude that at
node $n$, $d$ should be excluded from the results.

By solving this IDE instance, we achieve the effect of eliminating dataflow
facts that are reachable only along infeasible paths. In the rest of this
section, we describe how we encode event handler operations as micro-functions,
and how we transform an IDE solution back to an IFDS solution. We also discuss
theoretical properties of our technique.

\subsection{Representing event handler state}\label{sec:technique-event-state}

\begin{figure}[!t]
  \centering
  \begin{minipage}[b]{0.65\textwidth}
    \centering
    \includegraphics[width=\textwidth,trim={2cm 0.3cm 1cm 1cm},
    clip]{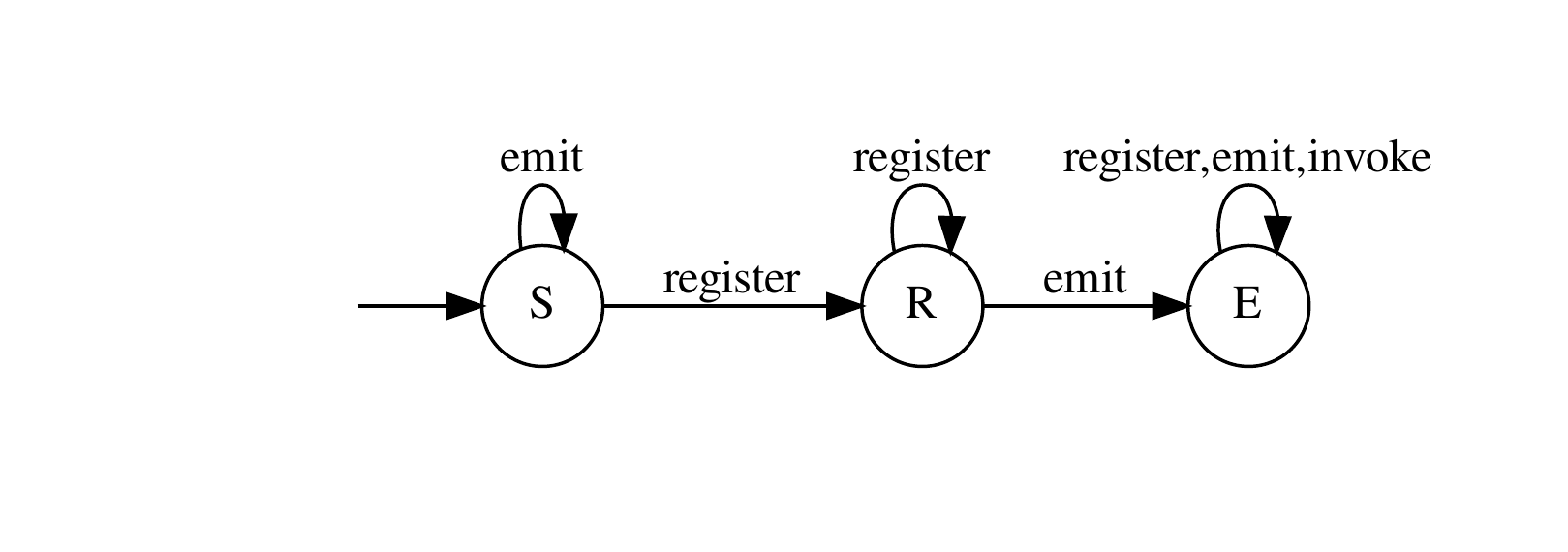}
  \end{minipage}
  ~ %
  \begin{minipage}[b]{0.3\textwidth}
    \centering
    \includegraphics[width=0.25\textwidth,trim={1.2cm 1.2cm 1.2cm 1.2cm},
    clip]{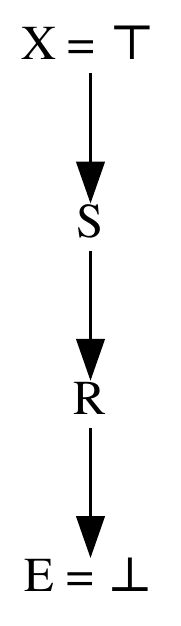}\\
  \end{minipage}
  \vspace{-5pt}
  \caption{Event handler state: concrete states and their transitions (left)
  and the lattice $L$ representing abstract states (right).}\label{fig:HandlerState}
    \vspace{-15pt}
\end{figure}

\noindent For simplicity of presentation, we restrict our attention in this
subsection to programs with a single event handler. We generalize to multiple
event handlers in the next subsection. We define three possible \emph{states}
for an event handler:

\begin{description}
  \item[S (start):] the event handler has not yet been registered.
  \item[R (registered):] the event handler has been registered for the event, but
    the event has not yet been emitted after registration. (Events emitted
    before registration are ignored.)
  \item[E (emitted):] the event handler has been registered and the event has
    been emitted after registration.
\end{description}

\noindent These states model the event handler during an actual program
execution. They are distinct from the event handler \emph{operations} (event
handler registration, event emission, and event handler invocation) we discussed 
in \cref{sec:motivation}, which cause transitions between the states. For
example, an event handler is initially in the \emph{start}~(S) state. When the
handler is registered, then its state becomes the \emph{registered}~(R) state.
When an event associated with that handler is emitted, the state becomes the
\emph{emitted}~(E) state. Only in this state can the handler be invoked from the
event loop; the handler can never be invoked from any other state. These
transitions are summarized in \cref{fig:HandlerState}.

To model this state machine in a static analysis, we need a fourth state,
\emph{infeasible}~(X). Invoking the event handler from the \emph{start}~(S)
state (before handler registration) or \emph{registered}~(R) state
(before event emission) can never happen at run time, but
such an ordering may arise during the analysis, so we must identify it as an
infeasible path. We use the IDE algorithm to keep track of event handler state
and rule out data flow along infeasible paths.

Specifically, we define $L$ to be the chain lattice over the set $\{X, S, R,
E\}$ with the ordering $X \sqsupseteq S \sqsupseteq R \sqsupseteq E$, as
depicted in \cref{fig:HandlerState}. The
lattice elements $S$, $R$, and $E$ indicate the corresponding states of the
event handler, and the top element $X$ indicates the infeasible state, \ie, the
dataflow fact has traversed a control-flow path that was infeasible.

Recall that the IDE algorithm maps a dataflow fact $d$ to the top element of $L$
to indicate that the fact does \emph{not} hold at the given program point. The
ordering between the four elements is designed to model the behavior at
control-flow merge points: when two control-flow paths merge, the associated
event handler state after the merge is the lesser of the two states before the
merge. For example, if one control-flow path has passed through an
\emph{infeasible sequence of operations}~(X) and the second control-flow path
has passed through a feasible sequence of operations that results in the event
handler being \emph{registered but not emitted}~(R), then after the control-flow
merge, the event handler is in state $X \sqcap R = R$; it may have been
\emph{registered but not emitted}~(R).

At the main entry point of the program, the event handler is defined to be in
the \emph{start}~(S) state for each fact $d$ that holds at the entry
point.\footnote{Normally, the IDE algorithm initializes every fact $d$ to the
  top element, \ie, $X$. In this case, we could label edges leaving the entry point
with micro-functions that update every fact $d$ to S.
However, for convenience, we simply initialize every fact $d$ to
S.} As dataflow facts are propagated during the analysis,
we track event handler state with IDE micro-functions,
encoding the state machine transitions along each edge of the
exploded supergraph. The default micro-function along most edges
is the identity, indicating that the event handler state
does not change. The other micro-functions are defined in \cref{tbl:microfun}
and correspond to the operations discussed in \cref{sec:motivation}.

\begin{wraptable}{r}{0.3\textwidth}
  \vspace{-30pt}
  \caption{Micro-function definitions.}\label{tbl:microfun}
  \vspace{10pt}
  \begin{tabular}{c| c c c}
      & $\register$ & $\emit$ & $\invoke$\\
    \hline
    X & X & X & X\\
    S & R & S & X\\
    R & R & E & X\\
    E & E & E & E
  \end{tabular}
  \vspace{-20pt}
\end{wraptable}

\medskip\noindent\textbf{Event handler registration.}
The first micro-function labeled edge is a control-flow edge that represents an
event handler registration operation. For example, the 
control-flow edge from \code{door.on('open', ...)} to the library
in \cref{fig:DoorsExample} causes the event handler to transition from the start
state to the registered state.
If the event handler is in any other state, then the
registration is ignored. We define the micro-function for this edge in
\cref{tbl:microfun}, first column.

\medskip\noindent\textbf{Event emission.}
The second micro-function labeled edge is a control-flow edge that represents
event emission. An example is the edge from
\code{door.emit('open')} to the library: the handler associated with the
\code{open} event transitions from the registered state to the emitted state.
In all other cases, the event emission is ignored. We define the micro-function
for this edge in \cref{tbl:microfun}, second column.

\medskip\noindent\textbf{Event handler invocation.}
The third micro-function labeled edge is a control-flow edge that represents
event handler invocation. Examples of these edges are from the library to
the start nodes of both event handlers. If the handler is \emph{not} in the
emitted state, then it transitions to the infeasible state because the
handler is being invoked before it has been registered or its event has been
emitted. We define the micro-function for this edge in
\cref{tbl:microfun}, third column.

\medskip\noindent\textbf{Discussion.}
The transformation~$\Transform : \Ghash \to \pair{\Ghash}{\EdgeFn}$
converts an instance of the IFDS framework to an instance of the IDE framework.
It does not change the structure of the exploded supergraph, $\Ghash$, but it
provides $\EdgeFn : \Ehash \to (L \to L)$, an assignment of exploded supergraph
edges to micro-functions. For programs with a single event handler,
$\EdgeFn$ is defined as follows:
\begin{equation*}
  \EdgeFn(e) =
  \begin{cases}
    \register & \quad \text{if edge } e \text{ registers the handler} \\
    \emit     & \quad \text{if edge } e \text{ emits an event for the
    handler} \\
    \invoke   & \quad \text{if edge } e \text{ invokes the handler from
    the event loop} \\
    \identity & \quad \text{otherwise}.
  \end{cases}
\end{equation*}

Returning to our example in \cref{fig:DoorsExample}, consider the execution
path that is actually taken at run time: the door opening event handler is
registered by \code{door.on('open', ...)} on line~\ref{line:RegisterHandleOpen},
the door opening event is emitted by \code{door.emit('open')} on
line~\ref{line:EmitOpen}, and the door opening event handler is invoked by the
edge from the library to \texttt{\small start}\textsubscript{\code{open}}.

For this control-flow path, the analysis computes the composition of the
micro-functions, namely $\invoke \circ \emit \circ \register$. Applying this
composed function to the initial state, we have $\invoke(\emit(\register(S))) =
E$, so any data flow associated with this path is considered feasible.

On the other hand, consider a control-flow path in which the event handler is
registered and invoked, but the event is never emitted. The composed
micro-function for such a path is $\invoke \circ \register$, so we have
$\invoke(\register(S)) = X$. Thus, any data flow computed along that path is
considered infeasible.

Recall that an instantiation of the IDE framework requires an efficient
representation of micro-functions and an efficient implementation of their
composition, meet, and equality test. A micro-function $f: L\to L$ can be
efficiently represented as a table of the four values
\mbox{\microfun{f(X)}{f(S)}{f(R)}{f(E)}}. Since there are only $4^4 = 256$ possible
such functions, compositions and meets of micro-functions can be precomputed,
and only 8 bits are required to represent a micro-function.

\subsection{Multiple event handlers}\label{sec:technique-multiple-handlers}

For programs with multiple events and multiple event handlers, it is necessary
for the analysis to distinguish them. In \cref{fig:DoorsExample}, the
control-flow path that registers the \code{hdlOpen} event handler, emits the
\code{open} event, and invokes the opening event handler is feasible. However,
the path that instead invokes the \code{hdlClose} handler should be
infeasible, because the door closing event handler has never been registered and
the \code{close} event has never been emitted. Our solution is to maintain a
separate state for each event handler.

Thus, we define the IDE lattice $L'$ to be the map lattice $H \to L$, where $H$
is the set of event handlers in the program and $L$ is the lattice for a single
event handler that we discussed in the previous subsection. For each node
$\pair{n}{d}$ in the exploded supergraph, the IDE algorithm using lattice $L'$
computes a map $m : H \to L$ that assigns a separate state for each
event handler in the program.

Recall that the IDE framework requires an efficient representation of
micro-functions in $L' \to L'$, which in this case is $(H \to L) \to (H \to L)$.
Efficiently representing such functions is non-trivial. There are
$(4^{|H|})^{4^{|H|}}$ possible functions of this type, so any representation
that could encode all of them would require $\Omega(|H|\cdot4^{|H|})$ bits to
encode each one. The key to an efficient encoding is the observation that all of
the micro-functions that actually occur during an analysis, including their
compositions and meets, are separable, in that the effect of an operation on the
state of one event handler is independent of the states of other event handlers
before the operation. In other words, the state that an event handler
transitions to depends only on that handler's previous state, and not the state
of any other event handler.

Each separable micro-function can thus be represented by a function in \mbox{$H
\to (L \to L)$} that models the effect $L \to L$ of an operation on each event
handler in $H$ separately. We discussed in the previous subsection how to
efficiently represent a function in $L \to L$. Now, to represent a
micro-function in $H \to (L \to L)$, we need only to tabulate $|H|$ functions of
type $L \to L$, one for each event handler in $H$. The operations by the IDE
framework, composition, meet, and equality comparison, are computed pointwise,
separately for each event handler. Effectively, a micro-function in $L' \to L'$
is represented by a map of event handlers to micro-functions in $L \to L$. Note
that this representation of micro-functions and the required operations adds a
factor of $O(|H|)$ to the asymptotic complexity of the IDE algorithm.

The version of $\EdgeFn : \Ehash \to (L' \to L')$ that supports multiple event handlers is therefore
defined as:
\begin{equation*}
  \EdgeFn(e) =
  \begin{cases}
    \register_{h} & \quad \text{if edge } e \text{ registers handler } h \\
    \emit_{h}     & \quad \text{if edge } e \text{ emits an event for handler }
      h \\
    \invoke_{h}   & \quad \text{if edge } e \text{ invokes handler } h
      \text{ from the event loop} \\
    \identity     & \quad \text{otherwise}.
  \end{cases}
\end{equation*}
We use the subscript $h$ to indicate that a micro-function updates only the
state assigned to $h$, and not the state of any other event handler.
(Note that the default micro-function, $\identity$, does not update \emph{any} state.)
In an implementation,
$\EdgeFn$ must also be able to determine which handler $h$ is affected by each
edge in the exploded supergraph.

\subsection{Transforming IDE results to IFDS results}\label{sec:untransform}

When IDE finishes analyzing a program, its output is, for each
program point, a map from elements of $D$ to elements of $L'$. To convert this
output to a result for the original IFDS problem, we must identify, at each
program point, the subset of elements of $D$ that are reachable along feasible
paths. In our context, a path is feasible if, for every event handler, the
operations affecting that event handler along the path are in a feasible
sequence (\eg, the handler is not invoked before it is registered or its event emitted). In other
words, a path is feasible if the element of $L'$ computed by the IDE analysis
maps every handler to a state other than X. Formally, we define an
``untransform'' function $\Untransform : (\Nstar \to (D \to L')) \to
(\Nstar \to D)$ that converts an IDE result $\mathcal{R}$ to an IFDS result:
\begin{equation*}
  \Untransform(\mathcal{R}) = \lambda n . \{ d \;|\; \forall h \in H \,.\,
    \mathcal{R}(n)(d)(h) \neq X \}.
\end{equation*}

In \cref{fig:DoorsExample}, on the control-flow path that first passes
through \code{door.on('open', ...)} and then through \code{door.emit('open')},
the micro-function for that path is $\{ h_\textsf{open} \mapsto \emit
\circ \register, \allowbreak h_\textsf{close} \mapsto \identity \}$, which
computes the event handler state mapping $\{ h_\textsf{open} \mapsto E,
\allowbreak h_\textsf{close} \mapsto S \}$. If that path then continues into
\code{hdlOpen}, the event state will remain at $\{ h_\textsf{open} \mapsto E,
\allowbreak h_\textsf{close} \mapsto S \}$, and thus the analysis will conclude
that the path is feasible.

However, if the path continues into \code{hdlClose} instead, the composed
micro-function becomes $\{ h_\textsf{open} \mapsto \emit \circ \register,
  \allowbreak h_\textsf{close} \mapsto \invoke \}$, which computes the event
  handler state mapping $\{ h_\textsf{open} \mapsto E, \allowbreak
  h_\textsf{close} \mapsto X\}$. Since at least one handler is in state X, the
  analysis will conclude that this path is infeasible and discard all dataflow
  facts computed along this path.

\subsection{Theoretical results}\label{sec:theoretical-results}

\smallskip\noindent\textbf{Soundness and precision.}
Our transformation is sound: the IDE analysis considers all feasible dataflow
paths, \ie, the ones that occur during a program execution. Any dataflow fact
that IFDS computes along a concrete path will be returned by our technique.

\begin{theorem}[Soundness]
  Let $P$ be an IFDS problem, $p = [\startmain, \ldots, n]$ be a concrete
  execution path, and $d \in D$ be a dataflow fact. Then:
  \begin{equation*}
    d \in \MF(p)(\varnothing)
    \implies
    d \in \Untransform\big(\SolveIDE(\Transform(P))\big)(n).
  \end{equation*}
\end{theorem}

\noindent Our transformation is precise: the IDE analysis returns a subset of
the dataflow facts that would be computed by IFDS. Dataflow facts computed along
infeasible paths are not included in the result of our transformation.

\begin{theorem}[Precision]
  Let $P$ be an IFDS problem and $n \in \Nstar$ be any node in the supergraph.
  Then:
  \begin{equation*}
    \Untransform\big(\SolveIDE(\Transform(P))\big)(n) \subseteq \SolveIFDS(P)(n).
  \end{equation*}
\end{theorem}

\smallskip\noindent\textbf{Efficiency.}
As discussed by Reps et al.~\cite[sec.~5]{Reps95},
the asymptotic complexity of solving an IFDS problem instance is \mbox{$O(|E|
\!\cdot\! |D|^3)$}. An equivalent IDE problem instance
also requires \mbox{$O(|E| \!\cdot\! |D|^3)$} time to
solve, provided that the micro-functions have an \emph{efficient
representation}~\cite[def.~5.2]{Sagiv96}. Our representation of micro-functions
adds a time and space overhead of $O(|H|)$. Therefore, the asymptotic complexity
of the event-driven IDE analysis is \mbox{$O(|E| \!\cdot\! |D|^3 \!\cdot\! |H|)$}.
\section{Implementation}\label{sec:implementation}

\noindent
To demonstrate the effectiveness of our technique on small-scale event-driven
programs, we implemented a proof-of-concept called \Borges,
which analyzes a subset of \js.

\subsection{Uninitialized variables analysis as an IFDS problem}\label{sec:implementation-ifds}

As input, \Borges takes a list of \js files to be analyzed (including a model of
any library functions used) and an \emph{event model specification} describing
which function calls represent event handler registrations, event emissions, and
event handler invocations. \Borges transforms the
IFDS problem into an IDE problem, solves the IDE problem, and filters out results
that were computed by traversing infeasible paths.

\Borges is implemented as a Scala application and builds on two program
analysis infrastructures: TAJS~\cite{Jensen09} and Flix~\cite{Madsen16}. We use
TAJS to construct control flow graphs and call graphs for \js programs. \Borges
uses the control flow graph as the basis for constructing the supergraph that is
used by IFDS and IDE, and the call graph to determine which
functions are invoked from each call site. We use Flix to solve the IFDS and IDE
problems; in particular, we implement the analyses in the Flix
language and instantiate the uninitialized variables analysis by implementing
the dataflow functions in Scala. In principle, however, \Borges is applicable to
any programming language and dataflow problem that can be expressed in the IFDS
framework.

One challenge that we encountered involves the handling of arrays and objects.
In \js, arrays are list-like objects that may be non-contiguous, and object
properties are accessed via string values that may be computed at run time,
posing significant challenges to static
analysis~\cite{DBLP:conf/ecoop/SridharanDCST12}.  Since the challenge of
precisely modeling objects and arrays is largely orthogonal to the issue of
avoiding infeasible paths in the presence of event-handling constructs, we chose
to adopt a simplistic approach where the abstract locations used to represent
objects and arrays are unified with those representing their elements. In other
words, if an object (array) is initialized, then so are all its properties
(elements).

\clearpage
\subsection{Transforming to an IDE problem}\label{sec:implementation-ide}

In order to produce more precise results, \Borges transforms IFDS problems
into IDE problems that track the operations associated with each event
handler, as well as each handler's state. Information about which function calls
correspond to which event handler operations must be provided to \Borges as an
\emph{event model specification}, which also indicates the
argument that represents the event name and the argument that represents
the event handler. Using this information, \Borges can identify which call sites
involve event handler operations.

For example, the program in \cref{fig:DoorsExample} uses the Node.js
\code{events} library. Applying static analysis to complex libraries poses challenges
that are beyond the scope of this paper, and our approach to handle library-based applications
is to provide a stub that models the library's essential functions and control flow.  
In the stub for the \code{events} library, we provide the functions \code{on},
\code{emit}, and \code{_eventDispatcher}. The event model specifies that a call
to \code{on} (\eg, \code{on('open', hdlOpen)}) \emph{registers} the second
argument (\code{hdlOpen}) as an event handler on the event given as the first
argument (\code{open}), a call to \code{emit} (\eg, \code{emit('open')})
\emph{emits} the event given as its argument (\code{open}), and a call from
inside the library (specifically, from \code{_eventDispatcher}) \emph{invokes}
an event handler.

Using this information, along with the output from TAJS, \Borges constructs a
mapping of event handler registrations that happen in a program. For each edge in the
control flow graph, \Borges can identity whether it affects event handler
state (\ie, through a registration, event emission, or invocation), and if so, which event
name and event handler is involved. Furthermore, \Borges also computes a mapping
from event names to event handlers, to easily identify which handler responds to
a given event emission.

The transformation from an IFDS problem to an IDE problem is straightforward.
Recall that the IFDS algorithm uses an exploded supergraph to represent dataflow
functions, while in the IDE algorithm, $\EdgeFn$ assigns a micro-function to
exploded supergraph edges. \Borges provides such an implementation of
$\EdgeFn$ to determine the micro-function for a given edge and
event handler. For instance, the edge representing a call to
\code{register('open', hdlOpen)} is labeled with the $\register$
micro-function for the \code{hdlOpen} handler.

With all the exploded supergraph edges labeled, solving the IDE problem
computes the composition of all the micro-functions along a control-flow
path, taking the meet whenever multiple paths merge. In other words, when
computing dataflow facts for the possibly uninitialized variables analysis,
\Borges also maintains the event handler states. Thus, before reporting a final
result for each program point, \Borges can examine the states of each event
handler and filter out any result with an event handler in the infeasible state.
\clearpage
\section{Case Studies}\label{sec:casestudy}

In this section, we discuss three examples to demonstrate our approach. We
return to the file system example in \cref{sec:motivation} and briefly discuss
two other programs. We run \Borges on three small, event-driven Node.js
applications, and apply our transformation to a possibly uninitialized variables
analysis.

\medskip\noindent\textbf{File system module, revisited.}
Recall \cref{fig:dirstat}, where
\code{sum} is read without being initialized, but only along an infeasible
path.
\Borges can improve precision by considering the order in which
callbacks are executed. Specifically, the calls to \code{readdir}
(line~\ref{line:readdir}) and \code{stat} (line~\ref{line:stat}) are
\emph{registration} operations for the \code{f} and \code{h} callbacks,
respectively. However, the \emph{emission}
operation is implicit and happens from within the event loop. Since event emission
happens after event handler registration but before event handler
invocation, we model it as occurring immediately after registration. In other
words, the micro-function labeling the calls to \code{readdir} and \code{stat}
is $\emit \circ \register$. Finally, invocations of
\code{f} and \code{h} are \emph{invocation} operations, which correspond to the
micro-function $\invoke$.

When \Borges analyzes the application, it identifies two paths with
respect to the callbacks. In one path, \code{readdir} is called, \code{f} is
invoked, \code{stat} is called, and \code{h} is invoked.
The composition of micro-functions along this path is
$\{ h_\textsf{f} \mapsto \invoke \circ \emit \circ \register, h_\textsf{h}
\mapsto \invoke \circ \emit \circ \register \}$, which computes the event
handler state mapping $\{ h_\textsf{f} \mapsto E, h_\textsf{h} \mapsto E \}$,
meaning the path is feasible.

However, in the infeasible path where \code{readdir} is called and then \code{h}
is invoked, the composed micro-function is $\{ h_\textsf{f} \mapsto \emit \circ
  \register, h_\textsf{h} \mapsto \invoke \}$, which computes the event handler
  state $\{ h_\textsf{f} \mapsto E, h_\textsf{h} \mapsto X \}$, meaning the
  path is infeasible. Therefore, any results computed along this path are
  filtered out.

\begin{wrapfigure}{r}{0.5\textwidth}
  \vspace{-30pt}
  \centering
  {\scriptsize \lstinputlisting{figs/timer.js}}
  \vspace{-10pt}
  \caption{Example application using timers.}\label{fig:timer}
  \vspace{-20pt}
\end{wrapfigure}
\medskip\noindent\textbf{Timers module.}
\Cref{fig:timer} implements a simple
timer. It is similar to the file system example, as it has two callbacks
that can be executed only in a certain order. The application
prompts the user for a number and then counts down from that
number in one-second intervals. It uses the \code{timers}
module,
 whose
functions are defined in the global scope.

Because the callbacks \code{start} (line \ref{line:stdinCallback}) and
\code{tick} (line~\ref{line:setTimeoutCallback}) are invoked asynchronously, a
traditional static analysis might consider an execution path where
\code{tick} is executed before
\mbox{\code{start},} and conclude that \code{rem} is possibly uninitialized when it is
read on line~\ref{line:readRemaining}. However, this is an infeasible
path: \code{tick} is only registered as a callback by \code{start} and itself,
so it can be invoked only after \code{start} has finished executing.
As a result, \Borges labels the execution path with
the micro-function $\{
  h_\textsf{start} \mapsto \emit \circ \register, h_\textsf{tick} \mapsto
  \invoke \}$ and computes the event handler state mapping as $\{
h_\textsf{start} \mapsto E, h_\textsf{tick} \mapsto X \}$.

\begin{wrapfigure}{r}{0.5\textwidth}
  \vspace{-10pt}
  \centering
  {\scriptsize \lstinputlisting{figs/server.js}}
  \vspace{-10pt}
  \caption{Example application using \code{net}.}\label{fig:server}
  \vspace{-20pt}
\end{wrapfigure}
\medskip\noindent\textbf{Net module.}
The program in \cref{fig:server} implements a small TCP server
using the Node.js \code{net}
module.
It creates a server that
listens for client connections and mirrors input back to the client. A
corresponding client application could be implemented in \js using the
\code{net} module, or in any other language of choice.

Without an ordering constraint between the \code{lstn}
(line~\ref{line:serverListen}) and \code{conn}
(line~\ref{line:serverConnected})
callbacks, a traditional analysis might consider infeasible paths, \eg,
where \code{conn} is invoked before \code{lstn}.
Along this path, the analysis concludes
that \code{nConn} on line~\ref{line:incrementNrConnects} is
possibly uninitialized. However, \code{conn} can be
executed only after \code{lstn} finishes, which guarantees that
\code{nConn} is initialized. In \Borges, such a path would be labeled by the
micro-function $\{ h_\textsf{lstn} \mapsto \emit \circ \register,
h_\textsf{conn} \mapsto \invoke \}$, which computes the event handler state
$\{ h_\textsf{lstn} \mapsto E, h_\textsf{conn} \mapsto X \}$.

\section{Related work}\label{sec:RelatedWork}

Bodden et al.~\cite{DBLP:conf/pldi/BoddenTRBBM13} use the IDE algorithm to
enhance the precision of an IFDS analysis when analyzing software product lines.
 They modify any
IFDS analysis into an IDE analysis that runs on the original program and tracks
the product line variants in which each dataflow fact holds. 

Rapoport et al.~\cite{Rapoport15} observe that context-sensitive
analysis can be made more precise by correlating the
dynamic dispatch behavior of different call sites on the same receiver object.
They also transform an arbitrary IFDS analysis into an IDE analysis that keeps
track of which methods have been dynamically dispatched on each receiver.

Jhala and Majumdar~\cite{DBLP:conf/popl/JhalaM07} adapt IFDS for
\emph{asynchronous programs}. In these programs, asynchronous calls are similar
to event registrations in that the procedure will be invoked at a later time;
however, there are no event emissions, so the time of invocation is
unpredictable. In their approach, instead of encoding additional state as an IDE
problem, they transform the analysis into a larger IFDS analysis that tracks, at
each asynchronous call site, the number of pending asynchronous calls made for
which the procedure has not yet been invoked.

Madsen et al.~\cite{Madsen15} introduce the \emph{event-based
call graph}, an extension of the call graph that models happens-before
constraints between event handler registrations and event emissions. However,
their approach does not scale well because the number of contexts is exponential
in the size of the program. 

Sotiropoulos et al.~\cite{Sotiropoulos19} introduce $\lambda_q$, a model
of asynchrony in \js, as well as the \emph{callback graph}, which describes the
possible orderings of callback execution. They design a
\emph{callback-sensitive} analysis for \js that uses the callback graph to
respect the execution order of callbacks. Their technique is specific to \js,
while our approach is language agnostic.

\section{Conclusion}\label{sec:Conclusion}

Traditional static analyses produce imprecise results when applied to
event-driven programs because they assume that event handler callbacks can
execute in any order.  We have presented an approach for precise dataflow
analysis that is applicable to any dataflow problem that can be expressed as
an instance of the IFDS framework, and is expressed as a transformation from
that presentation to an IDE problem, where the dataflow functions associated
with edges in the graph filter out infeasible paths that arise due to
impossible sequences of event handler invocations. We prove the correctness
of our transformation and report on a proof-of-concept tool.

\bibliographystyle{splncs04}
\bibliography{bib/refs}

\begin{thebibliography}{10}
\providecommand{\url}[1]{\texttt{#1}}
\providecommand{\urlprefix}{URL }
\providecommand{\doi}[1]{https://doi.org/#1}

\bibitem{DBLP:conf/pldi/ArztRFBBKTOM14}
Arzt, S., Rasthofer, S., Fritz, C., Bodden, E., Bartel, A., Klein, J., {Le
  Traon}, Y., Octeau, D., McDaniel, P.: {FlowDroid: Precise Context, Flow,
  Field, Object-sensitive and Lifecycle-aware Taint Analysis for Android Apps}.
  In: Proc. {ACM} {SIGPLAN} Conference on Programming Language Design and
  Implementation, PLDI (2014). \doi{10.1145/2594291.2594299}

\bibitem{Heros}
Bodden, E.: {Heros IFDS/IDE Solver}. \url{https://github.com/Sable/heros},
  accessed: 2018-10-05

\bibitem{DBLP:conf/pldi/BoddenTRBBM13}
Bodden, E., Tol{\^{e}}do, T., Ribeiro, M., Brabrand, C., Borba, P., Mezini, M.:
  {SPL\({}^{\mbox{LIFT}}\): Statically Analyzing Software Product Lines in
  Minutes Instead of Years}. In: Proc. {ACM} {SIGPLAN} Conference on
  Programming Language Design and Implementation, PLDI (2013).
  \doi{10.1145/2491956.2491976}

\bibitem{Fink:2008:ETV:1348250.1348255}
Fink, S.J., Yahav, E., Dor, N., Ramalingam, G., Geay, E.: {Effective Typestate
  Verification in the Presence of Aliasing}. ACM Transactions on Software
  Engineering and Methodology, TOSEM  \textbf{17}(2),  9:1--9:34 (2008).
  \doi{10.1145/1348250.1348255}

\bibitem{WALA}
{IBM Research}: {Watson Libraries for Analysis (WALA)}.
  \url{https://github.com/wala/WALA}, accessed: 2018-10-05

\bibitem{Jensen09}
Jensen, S.H., M{\o}ller, A., Thiemann, P.: {Type Analysis for JavaScript}. In:
  Proc. Static Analysis Symposium, SAS (2009).
  \doi{10.1007/978-3-642-03237-0\_17}

\bibitem{DBLP:conf/popl/JhalaM07}
Jhala, R., Majumdar, R.: {Interprocedural Analysis of Asynchronous Programs}.
  In: Proc. {ACM} {SIGPLAN-SIGACT} Symposium on Principles of Programming
  Languages, POPL (2007). \doi{10.1145/1190216.1190266}

\bibitem{Madsen15}
Madsen, M., Tip, F., Lhot{\'{a}}k, O.: {Static Analysis of Event-Driven
  {Node.js} {JavaScript} Applications}. In: Proc. {ACM} {SIGPLAN} Conference on
  Object-Oriented Programming, Systems, Languages, and Applications, OOPSLA
  (2015). \doi{10.1145/2814270.2814272}

\bibitem{Madsen16}
Madsen, M., Yee, M.H., Lhot\'{a}k, O.: {From Datalog to \textsc{Flix}: A
  Declarative Language for Fixed Points on Lattices}. In: Proc. {ACM} {SIGPLAN}
  Conference on Programming Language Design and Implementation, PLDI (2016).
  \doi{10.1145/2908080.2908096}

\bibitem{DBLP:conf/oopsla/NaeemL08}
Naeem, N.A., Lhot{\'{a}}k, O.: {Typestate-like Analysis of Multiple Interacting
  Objects}. In: Proc. {ACM} {SIGPLAN} Conference on Object-Oriented
  Programming, Systems, Languages, and Applications, OOPSLA (2008).
  \doi{10.1145/1449764.1449792}

\bibitem{Rapoport15}
Rapoport, M., Lhot{\'{a}}k, O., Tip, F.: {Precise Data Flow Analysis in the
  Presence of Correlated Method Calls}. In: Proc. Symposium on Static Analysis,
  SAS (2015). \doi{10.1007/978-3-662-48288-9\_4}

\bibitem{Reps95}
Reps, T., Horwitz, S., Sagiv, S.: {Precise Interprocedural Dataflow Analysis
  via Graph Reachability}. In: Proc. {ACM} {SIGPLAN-SIGACT} Symposium on
  Principles of Programming Languages, POPL (1995). \doi{10.1145/199448.199462}

\bibitem{Sagiv96}
Sagiv, S., Reps, T., Horwitz, S.: {Precise Interprocedural Dataflow Analysis
  with Applications to Constant Propagation}. In: Proc. Conference on Theory
  and Practice of Software Development, CAAP/FASE (1995).
  \doi{10.1007/3-540-59293-8\_226}

\bibitem{Sotiropoulos19}
Sotiropoulos, T., Livshits, B.: {Static Analysis for Asynchronous JavaScript
  Programs}. In: Proc. European Conference on Object-Oriented Programming,
  ECOOP (2019). \doi{10.4230/LIPIcs.ECOOP.2019.8}

\bibitem{DBLP:conf/ecoop/SridharanDCST12}
Sridharan, M., Dolby, J., Chandra, S., Sch{\"{a}}fer, M., Tip, F.: Correlation
  tracking for points-to analysis of {JavaScript}. In: Proc. European
  Conference on Object-Oriented Programming, ECOOP (2012).
  \doi{10.1007/978-3-642-31057-7\_20}

\end{thebibliography}

\clearpage\appendix
\section{Proofs}\label{sec:appendix-proofs}

Our work is based on the work by Rapoport et al.~\cite{Rapoport15}, which also transforms a given IFDS
problem instance to an IDE analysis that eliminates dataflow facts computed
along infeasible paths.

In this section, we assume that $\langle \Gstar, D, F, \MF, \sqcap \rangle$ and
its exploded supergraph representation, $\Ghash = \langle \Nhash, \Ehash
\rangle$, is the base IFDS problem instance given to our transformation, and that
$\langle \Gstar, D, L', \MEnv \rangle$ and its labeled exploded supergraph
representation, $\pair{\Ghash}{\EdgeFn}$, is the IDE problem instance
defined in \cref{sec:technique}; in particular, lattice $L'$ is the map
lattice $H \to L$ where $L$ is the event handler state lattice.
Finally, to simplify some notation, we write the edge $e_i = n_{i-1} \to
n_i$ for each $i$. Note that $e_1 = \startmain \to n_1$.

\subsection{Soundness and Precision}

Recall that in the IDE definition, we used $\TopEnv$ to denote the top element
of the environment lattice, \ie, the environment $\lambda d . \top$ that maps
every element to $\top$. We also defined the meet-over-valid-paths solution for
an IDE problem as $\SolveIDE(P) = \lambda n . \bigsqcap_{p \in\; \VP(n)}
\MEnv(p)(\TopEnv)$. However, for the event-driven analysis, the initial state is
$\SEnv = \lambda d . S$ rather than $\TopEnv$. Thus, the meet-over-valid-paths
solution for the event-driven analysis is:
\begin{equation*}
  \SolveIDE(P) = \lambda n . \!\bigsqcap_{p \in\; \VP(n)} \!\!\MEnv(p)(\SEnv).
\end{equation*}

To prove the soundness and precision theorems, we require two lemmas.

\begin{lemma}\label{lemma:a}
  Let $p = [ \startmain, \ldots, n ]$ be a concrete execution trace of some
  program, and let $h \in H$ be an event handler in the program. If at node $n$
  of the trace $p$, handler $h$ is in state $q$, and $d \in D$ is a dataflow
  fact such that $d \in \MF(p)(\varnothing)$, then $q \sqsupseteq
  \MEnv(p)(\SEnv)(d)(h)$.
\end{lemma}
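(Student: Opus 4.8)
The plan is to prove \cref{lemma:a} by induction on the length of the concrete trace $p$, with a case analysis on the last edge. Before starting, I would strengthen the statement so that it quantifies not only over $d \in D$ but also over the distinguished fact $\Zero$, whose handler state I track and initialize to $S$ exactly as the ordinary facts are under $\SEnv$. This strengthening is essential: a fact that is \emph{generated} along the path flows in from $\Zero$ through the representation relation, so without an inductive invariant for $\Zero$ there would be no predecessor to which the induction hypothesis could be applied. The base case $p = [\startmain]$ is immediate, since every handler is concretely in state $S$, and $\SEnv$ assigns $S$ to every fact (and to $\Zero$); thus $q = S = \MEnv(p)(\SEnv)(d)(h)$ and the inequality holds with equality.

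For the inductive step I would write $p = p' \cdot e$ with $e = m \to n$, let $\mu$ be the $h$-component of $\EdgeFn(e)$ (one of $\register$, $\emit$, $\invoke$ when $e$ affects $h$, and $\identity$ otherwise), let $q'$ and $q$ be the concrete states of $h$ at $m$ and $n$, and abbreviate $\env_m = \MEnv(p')(\SEnv)$. Since $d \in \MF(p)(\varnothing) = \MF(e)(\MF(p')(\varnothing))$, the representation relation $R_{\MF(e)}$ contains a pair $\pair{d_1}{d}$ with $d_1 \in \MF(p')(\varnothing) \cup \{\Zero\}$; this $d_1$ is the witness predecessor along the concrete flow. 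Because the IDE transformer computes $\MEnv(p)(\SEnv)(d)(h)$ as a meet of $\mu(\env_m(d_1')(h))$ over \emph{all} predecessors $d_1'$, it is in particular $\sqsubseteq \mu(\env_m(d_1)(h))$, so it suffices to dominate this single contribution.

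Three facts then close the gap. First, every micro-function in \cref{tbl:microfun} is monotone for the order $X \sqsupseteq S \sqsupseteq R \sqsupseteq E$, which I would verify column by column. Second, restricted to the concrete states $\{S, R, E\}$ the micro-function coincides with the real state-machine transition, since a concrete execution never performs an infeasible operation (a handler is invoked only from state $E$, and emissions before registration are ignored); hence $q = \mu(q')$. Third, the induction hypothesis applied to $d_1$ at $m$ gives $q' \sqsupseteq \env_m(d_1)(h)$. Chaining these, $q = \mu(q') \sqsupseteq \mu(\env_m(d_1)(h)) \sqsupseteq \MEnv(p)(\SEnv)(d)(h)$, which is exactly the claim.

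The main obstacle I anticipate is the bookkeeping around the representation relation and the $\Zero$ fact: I must argue that the presence of $d$ at $n$ always supplies a predecessor $d_1$ to which the induction hypothesis genuinely applies, and that meeting over the remaining predecessors can only lower the abstract value, so that a single dominating contribution is enough. The semantic step—that the concrete handler transition is precisely the tabulated micro-function restricted to $\{S, R, E\}$—is conceptually the crux, but it follows directly from the definitions of registration, emission, and invocation.
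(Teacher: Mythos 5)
Your proposal is correct and follows essentially the same route as the paper's proof: induction on trace length, bounding the IDE meet over all predecessor facts by the contribution of a single witness predecessor, and then a case analysis over the four micro-functions ($\register$, $\emit$, $\invoke$, $\identity$), which you factor cleanly into monotonicity plus agreement with the concrete transitions on $\{S,R,E\}$. If anything, your explicit strengthening to track the state of $\Zero$ is more careful than the paper itself, whose induction step tacitly picks a predecessor $d_k \in \MF(p)(\varnothing)$ and never addresses facts generated at the last edge, whose only incoming representation-relation edge comes from $\Zero$.
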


Intuitively, the lemma states that the event-driven analysis over-approximates
event handler state in a program execution. Note that $q$ is a concrete state,
so it cannot be X.

\begin{proof}
  By induction on the length of the program trace.\\

  \noindent
  \emph{Base case:} $p = [ \startmain ]$. There is no instruction (edge)
  in the trace, so there is no dataflow fact $d$. Therefore, the lemma trivially
  holds.\\

  \noindent
  \emph{Induction hypothesis:} Let $p = [ \startmain, \ldots, n_k ]$ and let
  $\hat{q} = \MEnv(p)(\SEnv)(d_k)(h)$, \ie, $\hat{q}$ is the abstract state
  computed by the event-driven analysis for the execution trace $p$, $d_k$ is
  some dataflow fact in $\MF(p)(\varnothing)$, and $h$ is some event handler.
  Suppose the lemma holds for trace $p$, \ie, $q \sqsupseteq \hat{q}$ where $q$
  is the concrete state for handler $h$ at node $n_k$ after the trace $p$.\\

  \noindent
  \emph{Induction step:} Now consider $p' = [ \startmain, \ldots, n_k, n_{k+1}
  ]$. Let $q'$ be the concrete state for handler $h$ at node $n_{k+1}$ after the
  trace $p'$. We must now show $q' \sqsupseteq \MEnv(p')(\SEnv)(d)(h)$.\\

  \noindent
  Because $\MEnv$ is extended from edges to paths by composition, we can
  rewrite:
  \begin{align*}
    \MEnv(p')(\SEnv)(d) &= \big( \MEnv(e_{k+1}) \circ \MEnv(e_k) \circ \cdots \circ
                      \MEnv(e_1) \big)(\SEnv)(d) \\
                      &= \big( \MEnv(e_{k+1})\big(\MEnv(e_k)( \cdots
                    (\MEnv(e_1)(\SEnv)) \cdots )\big)\big)(d) \\
                    &= \MEnv(e_{k+1})\big( \MEnv(p)(\SEnv) \big)(d).
  \end{align*}
  Note that $\MEnv(p)(\SEnv)$ computes the environment at node $n_k$ after the
  trace $p$, which is then transformed by $\MEnv(e_{k+1})$ to get the
  environment at node $n_{k+1}$, a single node after the trace $p$, which is a
  map from $D \to (H \to L)$. Thus, $\MEnv(e_{k+1})\big(\MEnv(p)(\SEnv)\big)(d)$
  returns a map from handlers to event handler states.\\

  \noindent
  Now, recall that for a given environment $\env : D \to L'$, the IDE framework
  represents an environment transformer $t : (D \to L') \to (D \to L')$ as a set
  of micro-functions in $L' \to L'$:
  \begin{equation*}
    t(\env)(d) = f_{\Zero,d}(\top) \sqcap \bigsqcap_{d' \in D}\!
      f_{d',d}\big(\env(d')\big).
  \end{equation*}
  For an edge $n_1 \to n_2 \in \Estar$, $\MEnv(n_1 \to n_2)$ gives the
  environment transformer for that edge, and for $d_1, d_2 \in D \cup \{ \Zero
  \}$, $\EdgeFn\big(\pair{n_1}{d_1} \to \pair{n_2}{d_2}\big)$ gives the
  corresponding micro-functions:
  \begin{align*}
    & \MEnv(n_1 \to n_2)(\textsf{env})(d) \\
      &\qquad = \EdgeFn\big(\langle n_1, \Zero \rangle \to
      \langle n_2, d \rangle\big)(\top) \, \sqcap \, \bigsqcap_{d' \in D}\!
      \EdgeFn\big( \langle n_1, d' \rangle \to \langle n_2, d
      \rangle\big)\big(\textsf{env}(d')\big).
  \end{align*}

  By substitution, we can rewrite:
  \begin{align*}
    & \MEnv(e_{k+1})\big( \MEnv(p)(\SEnv) \big)(d)(h) \\
    &\qquad = \bigg( \EdgeFn\big(\langle n_k, \Zero \rangle \to \langle n_{k+1},
        d \rangle\big)(\top) \, \sqcap \,\\
        &\qquad\qquad \bigsqcap_{d' \in D}\! \EdgeFn\big( \langle n_k, d' \rangle
        \to \langle n_{k+1}, d \rangle\big)\big( \MEnv(p)(\SEnv)(d')
        \big) \bigg)\!(h) \\
    &\qquad \sqsubseteq \left( \, \bigsqcap_{d' \in D}\! \EdgeFn\big( \langle
        n_k, d' \rangle \to \langle n_{k+1}, d \rangle\big)\big(
        \MEnv(p)(\SEnv)(d') \big) \right) \!\!(h) \\
    &\qquad \sqsubseteq \EdgeFn\big( \langle n_k, d_k \rangle \to \langle
        n_{k+1}, d \rangle\big)\big( \MEnv(p)(\SEnv)(d_k) \big)(h)
  \end{align*}
  This gives us the inequality:
  \begin{equation*}
    \EdgeFn\big( \langle n_k, d_k \rangle \to \langle n_{k+1}, d \rangle\big)
    \big( \MEnv(p)(\SEnv)(d_k) \big)(h) \sqsupseteq \MEnv(p')(\SEnv)(d)(h).
  \end{equation*}
  The inequality compares two different ways of computing the state of handler
  $h$ for dataflow fact $d$ at node $n_{k+1}$ (after the trace $p'$). On the
  right-hand side, the entire environment at node $n_{k}$ (after the trace $p$)
  is transformed by $\MEnv(e_{k+1})$, and then the state of handler $h$ is
  obtained from the new environment. On the left-hand side, at node $n_k$ (after
  the trace $p$), a map of event handlers to states (\ie, an element of the
  lattice $L' = H \to L$), is obtained for some dataflow fact $d_k$ and then
  updated by the micro-function $\EdgeFn\big(\pair{n_k}{d_k} \to
  \pair{n_{k+1}}{d} \big)$, before getting the state mapped to handler $h$. The
  inequality states that the left-hand side is more precise than the right-hand
  side; intuitively, this is because the left-hand side takes the effect of a
  single micro-function, while the right-hand side takes the effect of merging all
  the micro-functions.\\

  \noindent
  It remains to show $q' \sqsupseteq \EdgeFn\big(\pair{n_k}{d_k} \to
  \pair{n_{k+1}}{d}\big)\big( \MEnv(p)(\SEnv)(d_k) \big)(h)$ to complete the
  proof. To simplify notation, let $m = \MEnv(p)(\SEnv)(d_k)$ be the map of
  event handlers to states, as computed by the IDE algorithm along path $p$ for
  dataflow fact $d_k$. Note that $\hat{q} = m(h)$. We proceed by considering the
  four cases of $\EdgeFn$ and how the micro-functions update the map $m$.

    \begin{case}
      $e_{k+1}$ is an edge that registers handler $h$, so the
      micro-function is $\register_h$.\\

      \noindent
      The micro-function for this edge updates the state for handler $h$: if $h$
      is in state S, then $h$ will be in state R. Otherwise, the state is
      unchanged.
      The concrete state of handler $h$ at node $n_k$ is state $q$, which cannot
      be X, so there are three possibilities:
      \begin{itemize}
        \item If $q = S$, then edge $e_{k+1}$ registers handler $h$, so we get
          the new concrete state $q' = R$. By the induction hypothesis, $q
          \sqsupseteq m(h)$, so at node $n_k$, $h$ is mapped to S, R, or E. In
          each of those cases, $R \sqsupseteq \register_h(m)(h)$, so the lemma
          holds.

        \item If $q = R$, then the event handler has already been registered, so
          the state is unchanged and $q' = R$. By the induction hypothesis, $q
          \sqsupseteq m(h)$, so at node $n_k$, $h$ is mapped to R or E. In both
          of those cases, $R \sqsupseteq \register_h(m)(h)$, so the lemma holds.

        \item If $q = E$, then the event handler has already been registered
          (and its event has been emitted), so the state is unchanged and $q' =
          E$. By the induction hypothesis, $q \sqsupseteq m(h)$, so at node
          $n_k$, $h$ is mapped to E. In this case, $\register_h(m) = m$, so $E
          \sqsupseteq \register_h(m)(h)$, and the lemma holds.
      \end{itemize}
    \end{case}

    \begin{case}
      $e_{k+1}$ is an edge that emits an event for handler $h$, so
      the micro-function is $\emit_h$.\\

      \noindent
      The micro-function for this edge updates the state for handler $h$: if $h$
      is in state R, then $h$ will be in state E. Otherwise, the state is
      unchanged.
      The concrete state of handler $h$ at node $n_k$ is state $q$, which cannot
      be X, so there are three possibilities:
      \begin{itemize}
        \item If $q = S$, then the event emission is ignored, so $q' = S$. By
          the induction hypothesis, \mbox{$q \sqsupseteq m(h)$}, so at node
          $n_k$, $h$ is mapped to S, R, or E. In each of those cases, \mbox{$S
          \sqsupseteq \emit_h(m)(h)$}, so the lemma holds.

        \item If $q = R$, then the handler can respond to the event, so we get
          the new concrete state $q' = E$. By the induction hypothesis, $q
          \sqsupseteq m(h)$, so at node $n_k$, $h$ is mapped to R or E. In both
          of those cases, $E \sqsupseteq \emit_h(m)(h)$, so the lemma holds.

        \item If $q = E$, then the state is unchanged, so $q' = E$. By the
          induction hypothesis, $q \sqsupseteq m(h)$, so at node $n_k$, $h$
          is mapped to E. In this case, $\emit_h(m) = m$, so $E \sqsupseteq
          \emit_h(m)(h)$, and the lemma holds.
      \end{itemize}
    \end{case}

    \begin{case}
      $e_{k+1}$ is an edge \emph{from} the event loop to handler
      $h$, so the micro-function is $\invoke_h$.\\

      \noindent
      The micro-function for this edge updates the state for handler $h$: if $h$
      is in state E, then the state is unchanged. Otherwise, the state will be
      X.
      The concrete state of handler $h$ at node $n_k$ is state $q$, which cannot
      be X, S, or R. X never occurs during a concrete execution. S is not
      possible because it means the event handler has not been registered, so
      invocation cannot occur. R is not possible because it means the event has
      not been emitted, so invocation cannot occur. Therefore, $q = q' = E$.
      By the induction hypothesis, $q \sqsupseteq m(h)$, so at node $n_k$, $h$
      is mapped to E. In this case, $\invoke_h(m) = m$, so $E \sqsupseteq
      \emit_h(m)(h)$, and the lemma holds.
    \end{case}

    \begin{case}
      $e_{k+1}$ is any other edge, so the micro-function is
      $\identity$.\\

      \noindent
      The micro-function does not update the state of handler $h$. Similarly, in
      the concrete execution, there is no event handler operation on this edge,
      so $q' = q$. By the induction hypothesis, $q \sqsupseteq m(h)$, and
      $\identity(m) = m$, so $q' \sqsupseteq \identity(m)(h)$
      and the lemma holds. \qed
  \end{case}
\end{proof}

\begin{lemma}\label{lemma:b}
  Let $p = [ \startmain, \ldots, n ]$ be a concrete execution trace of
  some program, $h \in H$ be an event handler, and $d \in D$ be a dataflow fact.
  Then:
  \begin{equation*}
    d \in \MF(p)(\varnothing) \iff \MEnv(p)(\SEnv)(d)(h) \neq X.
  \end{equation*}
\end{lemma}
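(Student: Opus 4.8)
The plan is to prove the two directions of the biconditional separately, since they rely on different ideas and different hypotheses. The forward direction, $d \in \MF(p)(\varnothing) \implies \MEnv(p)(\SEnv)(d)(h) \neq X$, will follow almost immediately from \cref{lemma:a} together with the fact that $p$ is concrete. The backward direction, $\MEnv(p)(\SEnv)(d)(h) \neq X \implies d \in \MF(p)(\varnothing)$, is an independent argument that never actually uses concreteness and therefore holds for every \emph{valid} path; I would highlight this extra generality, since it is exactly what the precision theorem needs later (where one must reason about valid-but-infeasible paths, not just concrete ones).

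For the forward direction I would fix the handler $h$ and let $q$ be its concrete state at node $n$ after the trace $p$. Because $p$ is a concrete execution and $d \in \MF(p)(\varnothing)$, \cref{lemma:a} gives $q \sqsupseteq \MEnv(p)(\SEnv)(d)(h)$. A concrete state is one of $S$, $R$, $E$, so $q \neq X$ and, in the chain $X \sqsupseteq S \sqsupseteq R \sqsupseteq E$, it satisfies $q \sqsubseteq S$. Hence $\MEnv(p)(\SEnv)(d)(h) \sqsubseteq q \sqsubseteq S \sqsubsetneq X$, which yields $\MEnv(p)(\SEnv)(d)(h) \neq X$. This is the whole forward argument; its real content is deferred to \cref{lemma:a}.

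For the backward direction I would argue the contrapositive: if $d \notin \MF(p)(\varnothing)$ then $\MEnv(p)(\SEnv)(d)(h) = X$. The two structural facts I would lean on are that each micro-function $\register$, $\emit$, $\invoke$, $\identity$ maps $X$ to $X$ (the top row of \cref{tbl:microfun}), and that $X$ is the top of $L$ and hence the identity for $\sqcap$. Reading off the pointwise representation of the composed transformer $\MEnv(p)$, the value $\MEnv(p)(\SEnv)(d)(h)$ is a meet of composed micro-functions applied to the initial states of the source facts; because every micro-function fixes $X$ and $X$ is the meet-identity, a value strictly below $X$ at $\pair{n}{d}$ can arise only by inheriting a non-$X$ initial state along a micro-path through the exploded supergraph restricted to $p$. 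Since a non-$X$ initial state belongs only to facts that actually hold at the entry, a non-$X$ value at $\pair{n}{d}$ forces $\pair{n}{d}$ to be reachable from such a fact, and reachability in the exploded supergraph is precisely the condition $d \in \MF(p)(\varnothing)$; contrapositively, an unreachable $\pair{n}{d}$ stays pinned at $X$.

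The hard part will be making that ``non-$X$ implies reachable'' step fully rigorous, and in particular reconciling the convenient initialization $\SEnv = \lambda d.\,S$ with the intended one in which facts start at $\top = X$ and only the edges leaving the entry promote the holding facts to $S$ (cf.\ the footnote in \cref{sec:technique-event-state}). Under a naive all-$S$ reading, a fact $d$ unreachable from $\Zero$ but reachable from some other symbol $d'$ would spuriously acquire a non-$X$ value, so I would either run the argument against the proper initialization or first show the two initializations agree on reachable facts. A related bookkeeping point is the treatment of the $\Zero$ fact and of facts generated mid-path: I must check the $\Zero$ term does not force genuinely held facts up to $X$, so that they retain a state in $\{S,R,E\}$. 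I expect the cleanest route is a single induction on the length of $p$ whose step performs a four-way case analysis on $\EdgeFn(e_{k+1})$ mirroring \cref{lemma:a}, maintaining simultaneously that held facts carry a state in $\{S,R,E\}$ and that unreachable facts carry $X$; this also disposes of the base case $p = [\startmain]$, where no fact holds and every value is $X$.
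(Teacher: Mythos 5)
Your proposal is correct, and in both directions it takes a genuinely different---and tighter---route than the paper's own proof. For the forward direction, the paper runs a second induction on the trace length whose four-way case analysis largely duplicates the proof of \cref{lemma:a}, and which still has to cite \cref{lemma:a} inside the $\invoke_h$ case; you instead observe that the implication is an immediate corollary of \cref{lemma:a}: since $L$ is a chain with $X$ at the top, $q \sqsupseteq \MEnv(p)(\SEnv)(d)(h)$ together with $q \neq X$ (concrete states are never $X$) already forces $\MEnv(p)(\SEnv)(d)(h) \neq X$. That buys the same conclusion at a fraction of the cost and makes the logical dependence on \cref{lemma:a} explicit rather than implicit. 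For the backward direction, the paper's entire argument is a single sentence asserting that a non-$X$ value means there is a path in the exploded supergraph to $\pair{n}{d}$, hence $d \in \MF(p)(\varnothing)$; your contrapositive argument exposes exactly what this assertion rests on (every micro-function fixes $X$, and $X$ is the neutral element for $\sqcap$, so a non-$X$ value must be inherited from a non-$X$ source), and you correctly identify the point the paper glosses over: under the literal all-$S$ initialization $\SEnv = \lambda d\,.\,S$ on all of $D$, a fact $d$ that is unreachable from $\pair{\startmain}{\Zero}$ along $p$ but reachable from some $\pair{\startmain}{d'}$ with $d' \in D$ (via transfer edges $d' \to d''$ of the representation relations) acquires a non-$X$ value even though $d \notin \MF(p)(\varnothing)$, so the equivalence genuinely needs the ``proper'' initialization sketched in the paper's footnote, with only $\Zero$ (and thus facts generated from it) starting at $S$. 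Your proposed repair---an induction maintaining simultaneously that held facts stay in $\{S,R,E\}$ and that unreachable facts stay pinned at $X$---is the right one. Finally, your remark that the backward direction never uses concreteness and so holds for arbitrary valid paths is not merely extra generality: the paper's precision theorem applies \cref{lemma:b} to paths $p' \in \VP(n)$ that need not be concrete execution traces, so the lemma as stated does not literally license that step, and your generalized form is what makes the precision proof go through.
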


Intuitively, the lemma states that for a concrete execution path, the
event-driven analysis never computes an infeasible event handler state.

\begin{proof}

  \hspace*{\fill}\\

  \noindent \textit{$\implies$ direction}. By induction on the length of the
  program trace.\\

  \noindent
  \emph{Base case:} $p = [ \startmain ]$. There is no instruction (edge) in the
  trace, so there is no dataflow fact $d$. Therefore, the lemma trivially
  holds.\\

  \noindent
  \emph{Induction hypothesis:} Let $p = [ \startmain, \ldots, n_k ]$ and let
  $\hat{q} = \MEnv(p)(\SEnv)(d_k)(h)$, \ie, $\hat{q}$ is the abstract state
  computed by the event-driven analysis for the execution trace $p$, $d_k$ is
  some dataflow fact in $\MF(p)(\varnothing)$, and $h$ is some event handler.
  Suppose the lemma holds for trace $p$, \ie, $d \in \MF(p)(\varnothing)
  \implies \hat{q} \neq X$.\\

  \noindent
  \emph{Induction step:} Now consider $p' = [ \startmain, \ldots, n_k, n_{k+1}
  ]$. Let $q'$ be the concrete state for handler $h$ at node $n_{k+1}$ after the
  trace $p'$. We must now show $d \in \MF(p')(\varnothing) \implies
  \MEnv(p')(\SEnv)(d_k)(h) \neq X$.\\

  \noindent
  From the previous proof, we know:
  \begin{align*}
    \MEnv(p')(\SEnv)(d)(h) \sqsubseteq
    \EdgeFn\big( \langle n_k, d_k \rangle \to \langle n_{k+1}, d
        \rangle\big)\big( \MEnv(p)(\SEnv)(d_k) \big)(h).
  \end{align*}
  By the induction hypothesis, $\MEnv(p)(\SEnv)(d_k)(h) \neq X$ for all $h$, so
  we know that
  $\MEnv(p)(\SEnv)(d_k)$ is a map $m$ where each handler is mapped to S, R, or
  E. So we need to examine $m'$, the map $m$ after being updated by the
  micro-function on edge $\pair{n_k}{d_k} \to \pair{n_{k+1}}{d_{k+1}}$.\\

  \noindent
  Of the four cases, three of them ($\register_\textsf{h}$, $\emit_\textsf{h}$,
  and $\identity$) are straightforward. None of these micro-functions map any
  handler to X. So, for all $h \in H$, we have:
  \begin{align*}
    \MEnv(p')(\SEnv)(d)(h) \sqsubseteq m'(h).
  \end{align*}
  Therefore, $\MEnv(p')(\SEnv)(d)(h) \neq X$.\\

  \noindent
  The fourth case is when $\EdgeFn$ returns $\invoke_\textsf{h}$, which will map
  $h$ to X, \emph{unless} handler $h$ is currently mapped to E. However, along
  the concrete execution trace $p'$, the last edge $n_k \to n_{k+1}$ corresponds
  to an invocation of event handler $h$. This can only happen if $h$ has already
  been registered and its event emitted. In other words, the concrete state of
  $h$ must be E. By \cref{lemma:a}, $E \sqsupseteq m(h)$ so $m(h) = E$ and
  $\invoke_h(m)(h) = E$. Therefore, $\MEnv(p')(\SEnv)(d)(h) \neq X$.\\

  \medskip

  \noindent \textit{$\impliedby$ direction}.\\

  \noindent
  The premise states that after a concrete execution trace $p$, at node $n$ and
  dataflow fact $d$, handler $h$ is in a state other than X. In other words,
  there exists a path $p$ in the exploded supergraph to node $n$ where $d$
  holds, so by definition, $d \in \MF(p)(\varnothing)$. \qed
\end{proof}

We can now prove the soundness and precision theorems.

\setcounter{theorem}{0}

\begin{theorem}[Soundness]
  Let $P$ be an IFDS problem, $p = [\startmain, \ldots, n]$ be a concrete
  execution path, and $d \in D$ be a dataflow fact. Then:
  \begin{equation*}
    d \in \MF(p)(\varnothing)
    \implies
    d \in \Untransform\big(\SolveIDE(\Transform(P))\big)(n).
  \end{equation*}
\end{theorem}

\begin{proof}
  Recall the definitions of ``untransform'' $\Untransform(\mathcal{R})$ and
  meet-over-valid-paths for the event-driven analysis $\SolveIDE$:
  \begin{align*}
    \Untransform(\mathcal{R}) &= \lambda n . \{ d' \;|\; \forall h \in H \,.\,
      \mathcal{R}(n)(d')(h) \neq X \} \\
    \SolveIDE(P) &= \lambda n . \!\bigsqcap_{p' \in\; \VP(n)} \!\!\MEnv(p')(\SEnv).
  \end{align*}
  By substitution, we get:
  \begin{align*}
    \Untransform\big(\SolveIDE(\Transform(P))\big)(n) &= \left\{ d' \;|\;
      \forall h \in H \,.\, \SolveIDE(\Transform(P))(n)(d')(h) \neq X
      \right\} \\
    &= \left\{ d' \;|\; \forall h \in H \,.\, \!\!\left( \bigsqcap_{p'
       \in\, \VP(n)}\!\! \MEnv(p')(\SEnv) \!\right)\!\!(d')(h) \neq X \right\}\!.
  \end{align*}
  We have $d \in \MF(p)(\varnothing)$ so by \cref{lemma:b},
  $\MEnv(p)(\SEnv)(d)(h) \neq X$ for any $h \in H$. In other words, for the
  concrete path $p$, the event-driven analysis computes an environment $\env$
  where $\env(d)(h) \neq X$ for all handlers $h$. Such an environment is
  included in the meet-over-valid-paths $\bigsqcap_{p' \in\, \VP(n)}
  \MEnv(p')(\SEnv)$, whose result is a new environment $\env'$ where
  $\env'(d)(h) \neq X$ for all handlers $h$.\footnote{Recall our definition of
  environments in $D \to L'$, where $\env_1 \sqcap \env_2 = \lambda d . \lambda
  h . (\env_1(d)(h) \sqcap \env_2(d)(h))$} Therefore, $d \in
  \Untransform(\SolveIDE(\Transform(P)))(n)$. \qed
\end{proof}

\begin{theorem}[Precision]
  Let $P$ be an IFDS problem and $n \in \Nstar$ be any node in the supergraph.
  Then:
  \begin{equation*}
    \Untransform\big(\SolveIDE(\Transform(P))\big)(n) \subseteq \SolveIFDS(P)(n).
  \end{equation*}
\end{theorem}

\begin{proof}
  Let $P$ be an instance of the IFDS framework. On the right-hand side, we have:
  \begin{equation*}
    \SolveIFDS(P)(n) = \!\bigsqcap_{p \in\, \VP(n)} \!\!\MF(p)(\varnothing).
  \end{equation*}
  On the left-hand side, we have:
  \begin{align*}
    \Untransform\big(\SolveIDE(\Transform(P))\big)(n) &= \left\{ d \;|\;
      \forall h \in H \,.\, \SolveIDE(\Transform(P))(n)(d)(h) \neq X
      \right\} \\
    &= \left\{ d \;|\; \forall h \in H \,.\, \!\!\left( \bigsqcap_{p
       \in\, \VP(n)}\!\! \MEnv(p)(\SEnv) \!\right)\!\!(d)(h) \neq X \right\}\!.
  \end{align*}
  Consider a dataflow fact $d \in \Untransform\big( \SolveIDE(\Transform(P))
  \big)(n)$. This implies that for all event handlers $h \in H$, there exists at
  least one valid path $p' \in\, \VP(n)$ where $\MEnv(p')(\SEnv)(d)(h) \neq X$.
  (Otherwise, there exists a handler $h$ such that for all paths $p \in\,
  \VP(n)$, $\MEnv(p)(\SEnv)$ computes an environment $\env$ where
  $\env(d)(h) = X$, and taking the meet over all those environments implies $d$
  is not in our result.) By \cref{lemma:b}, this implies $d \in
  \MF(p')(\varnothing)$. Therefore, $d \in \bigsqcap_{p \in\, \VP(n)}
  \MF(p)(\varnothing)$, and so $d$ is a fact computed by IFDS. \qed
\end{proof}

\subsection{Efficiency}

According to Sagiv et al.~\cite[def.~5.2]{Sagiv96}, an IDE problem instance is
efficiently representable if its class of micro-functions $F \subseteq L \to L$
satisfies the following properties:

\begin{itemize}
  \item There is a representation for the identity and top functions.
  \item The representation is closed under meet and composition.
  \item $F$ is a finite-height lattice.
  \item Application, composition, meet, and equality can be computed in constant
    time.
  \item The representation of any function $f \in F$ requires constant space.
\end{itemize}

Recall that for the event-driven IDE analysis, our micro-functions are in $(H
\to L) \to (H \to L)$, where $H$ is the set of event handlers and $L$ is the
event state lattice. A function $f : L \to L$ is represented as
$\microfun{f(X)}{f(S)}{f(R)}{f(E)}$. To represent micro-functions in $(H \to L)
\to (H \to L)$, we actually represent them as functions in $H \to (L \to L)$,
because the state of one event handler does not affect the state of another
event handler. Intuitively, this representation is a map from event handlers in
$H$ to functions in $L \to L$.

Let us now consider whether our representation of functions in $H \to (L \to L)$
is efficient, according to the above properties:

\begin{description}
  \item[Representation for identity and top.]
    The identity function maps every event handler to the identity function in
    $L \to L$, \ie $\big\{ h \mapsto \microfun{X}{S}{R}{E} \big\}$ for all $h
    \in H$. The top function maps every event handler to the top function in $L
    \to L$, \ie $\big\{ h \mapsto \microfun{X}{X}{X}{X} \big\}$ for all $h \in
    H$.

  \item[Representation closed under meet and composition.]
    Consider two functions $f, g : L \to L$. Their representations are
    $\microfun{f(X)}{f(S)}{f(R)}{f(E)}$ and $\microfun{g(X)}{g(S)}{g(R)}{g(E)}$.

    $f \sqcap g$ is represented as $\microfun{f(X) \sqcap g(X)}{f(S) \sqcap
    g(S)}{f(R) \sqcap g(R)}{f(E) \sqcap g(E)}$.

    $g \circ f$ is represented as
    $\microfun{g(f(X))}{g(f(S))}{g(f(R))}{g(f(E))}$.

    Now consider two micro-functions $f', g' : H \to (L \to L)$. Their
    representations are $f' = \{ h_1 \mapsto f_1, \ldots, h_n \mapsto f_n \}$
    and $g' = \{ h_1 \mapsto g_1, \ldots, h_n \mapsto g_n \}$, where each $h_i
    \in H$ is a handler and each
    $f_i, g_i$ is a 4-tuple representation of a function in $L \to L$.

    $f' \sqcap g'$ is represented as $\big\{ h_1 \mapsto f_1 \sqcap g_1, \ldots,
    h_n \mapsto f_n \sqcap g_n \big\}$.

    $g' \circ f'$ is represented as $\big\{ h_1 \mapsto g_1 \circ f_1, \ldots,
    h_n \mapsto g_n \circ f_n \big\}$.

  \item[The micro-functions form a finite-height lattice.]
    First, let us consider functions in $L \to L$. $L$ is the event state
    lattice, which has only four elements. The representation of a function in
    $L \to L$ is effectively a 4-tuple $L \times L \times L \times L$, where the
    lattice ordering is defined pointwise, for each element of the tuple. That
    lattice has finite height; in fact, there are only $4^4 = 256$ elements in
    that lattice.

    Now, our micro-functions are represented as maps in $H \to (L \to L)$. These
    maps also form a lattice, where the ordering is pointwise for each $h \in
    H$. Since there are only finitely many event handlers in a program, there
    are only finitely many maps.

  \item[Time complexity of operations.]
    Application, composition, meet, and equality of functions in $L \to L$
    can be computed in constant time. However, our micro-functions are in $H
    \to (L \to L)$, which means operations need to be computed for each event
    handler in the program. Therefore, these operations require $O(|H|)$ time.

  \item[Space complexity of the representation.]
    The representation of a function in $L \to L$ requires constant
    space---in fact, the representation requires 8 bits. To represent
    functions in $H \to (L \to L)$, we require $|H|$ copies of each of the
    micro-function in $L \to L$. Therefore, the space requirement is
    $O(|H|)$.
\end{description}

The asymptotic complexity of the IDE algorithm is $O(|E| \!\cdot\! |D|^3)$,
assuming the micro-functions are efficiently representable. As discussed in the
IDE paper, the algorithm consists of a series of composition steps. There are at
most $O(|E| \!\cdot\! |D|^2)$ edges in the exploded supergraph, and each edge
can be used $O(|D|)$ times, so there are at most $O(|E| \!\cdot\! |D|^3)$
composition steps.

The IDE algorithm requires at most $O(|E| \!\cdot\! |D|^3)$ composition steps,
and since each step in our analysis costs $O(|H|)$, the overall time complexity
of the event-driven analysis is $O(|E| \!\cdot\! |D|^3 \!\cdot\! |H|)$.

\section{Case Studies in Detail}\label{sec:appendix-casestudy}

\begin{figure}[!ht]
  \vspace{-20pt}
  \centering
  \begin{minipage}[c]{0.5\textwidth}
    {\scriptsize \lstinputlisting{figs/timer2.js}}
  \end{minipage}
  \begin{minipage}[c]{0.49\textwidth}
    \includegraphics[width=\textwidth,trim={0.8cm 0.3cm 0.8cm 0.5cm},clip]{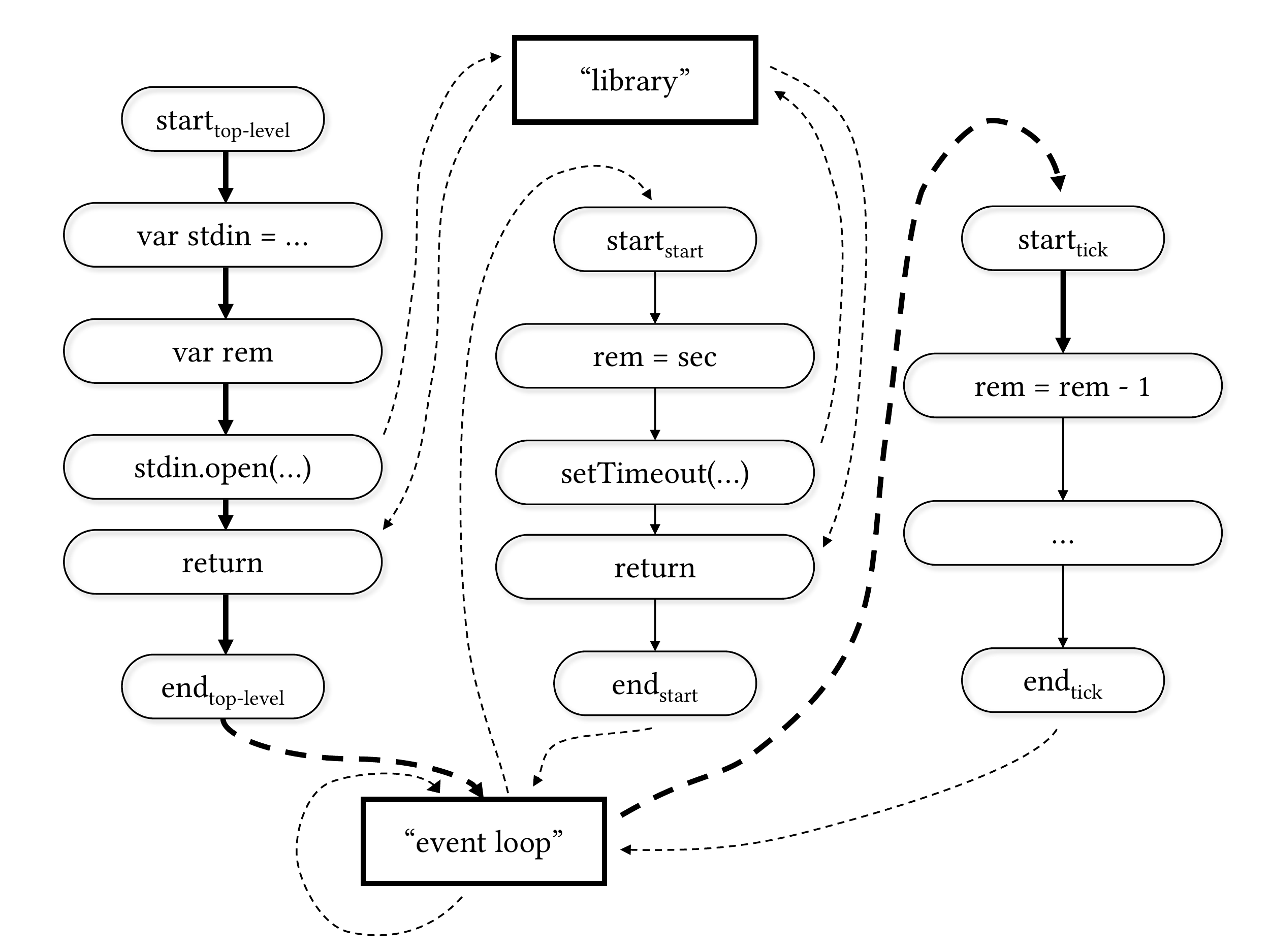}
  \end{minipage}
  \vspace{-5pt}
  \caption{\code{timer.js} from \cref{fig:timer} reproduced with its supergraph.
  Interprocedural edges are dashed; an infeasible path is shown in
  bold.}\label{fig:appendix-timer}
  \vspace{-15pt}
\end{figure}
\medskip\noindent\textbf{Timers module.}
When the application in \cref{fig:appendix-timer} executes, it first prints a message asking the user for
input (line~\ref{line:askNumber2}). Next, it connects a stream to standard input
(line~\ref{line:openStdin2}), making the stream available as an object assigned
to the variable \code{stdin}. The application also declares, but does not
initialize, the variable \code{rem} (line~\ref{line:declareRemaining2}). The call
to \code{stdin.on} (line~\ref{line:stdinCallback2}) takes two arguments,
registering the \code{start} event handler on the \code{data} event. The
\code{data} event is emitted when data is available on the standard input
stream, which causes the \code{start} handler to be invoked with the input data
as an argument. When this happens, the variable \code{rem} is initialized with
the user-provided input, \code{sec} (line~\ref{line:initializeRemaining2}).
Then, \code{setTimeout} is called (line~\ref{line:callSetTimeout12}), which
schedules a one-time execution of the \code{tick} callback after a one-second
delay. When \code{tick} is invoked, it decrements the \code{rem} variable
(line~\ref{line:readRemaining2}) and prints its value
(line~\ref{line:printRemaining2}). If \code{rem} is positive, then \code{tick}
calls \code{setTimeout} to schedule itself after another one-second delay
(line~\ref{line:callSetTimeout22}); otherwise, the application exits
(line~\ref{line:processExit2}).

In \Borges, the calls to \code{stdin.on} (line~\ref{line:stdinCallback2}) and
\code{setTimeout} (lines~\ref{line:callSetTimeout12}
and~\ref{line:callSetTimeout22}) are the \emph{registration} operations for the
\code{start} and \code{tick} callbacks, respectively. Similar to the \code{fs}
module, the \emph{emission} operations are implicit, so we model them as
occurring immediately after registration, and the \emph{invocation} operations
occur when \code{start} and \code{tick} are invoked from the event loop.
Therefore, when we run the analysis, the path that calls \code{stdin.on} and
invokes \code{tick} is labeled with the micro-function $\{ h_\textsf{start}
\mapsto \emit \circ \register, h_\textsf{tick} \mapsto \invoke \}$, and the
computed event handler state mapping is $\{ h_\textsf{start} \mapsto E,
h_\textsf{tick} \mapsto X \}$. In other words, this is an infeasible path, so
the possibly uninitialized variable on line~\ref{line:readRemaining2} should be
excluded.

\begin{figure}[!t]
  \centering
  \begin{minipage}[c]{0.5\textwidth}
    {\scriptsize \lstinputlisting{figs/server2.js}}
  \end{minipage}
  \begin{minipage}[c]{0.49\textwidth}
    \includegraphics[width=\textwidth,trim={1.2cm 0.3cm 0.8cm 0.5cm},clip]{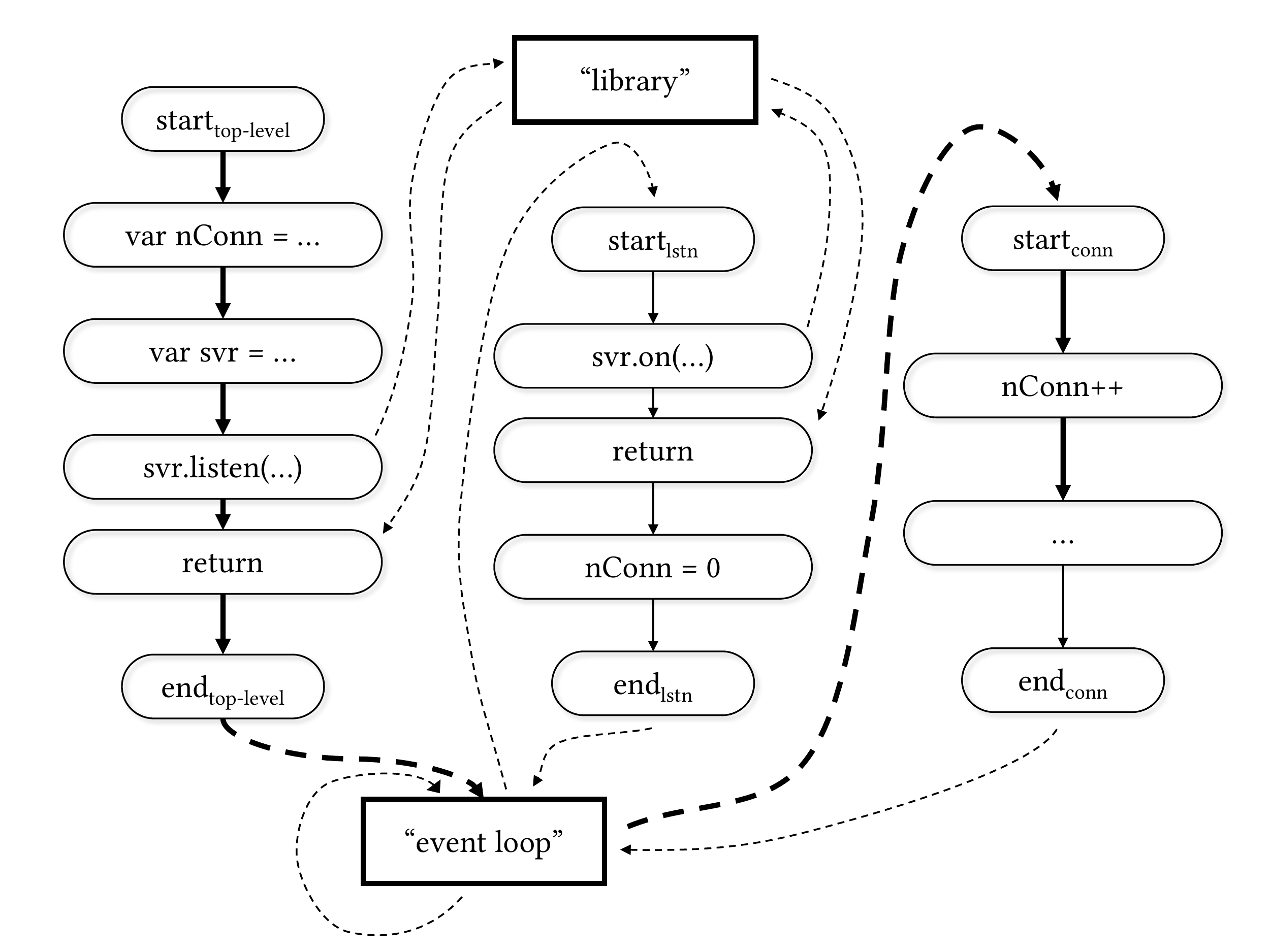}
  \end{minipage}
  \vspace{-5pt}
  \caption{\code{server.js} from \cref{fig:server} reproduced with its
  supergraph. Interprocedural edges are dashed; an infeasible path is shown in
  bold.}\label{fig:appendix-server}
  \vspace{-15pt}
\end{figure}
\medskip\noindent\textbf{Net module.}
The application in \cref{fig:appendix-server} starts by loading the \code{net} module
(line~\ref{line:requireNet2}), making its functions available as methods on
an object assigned to the variable \code{net}. Next, it declares, but does not
initialize, the variable \code{nConn} (line~\ref{line:declareNrConnects2}),
which counts the number of clients that have connected, and creates a server
(line~\ref{line:createServer2}), assigning the server object to the \code{svr}
variable. The call to \code{svr.listen} (line~\ref{line:serverListen2}) takes
two arguments: a port number for the server, and a callback \code{lstn}
that is invoked when the server is ready to accept client connections. When the
\code{lstn} callback is invoked, it calls \code{svr.on}
(line~\ref{line:serverListen2}) to register a second callback, \code{conn},
that is invoked on the \code{cxn} event, \ie, whenever a client connects.
Then, it prints a short message (line~\ref{line:printServerListening2})
indicating that the server is listening for connections, and initializes the
\code{nConn} counter to \code{0} (line~\ref{line:initializeNrConnects2}).
When a client connects, the \code{conn} callback is invoked with a
\code{cxn} object that contains information about the current connection.
The \code{conn} callback prints a message that a client has connected
(line~\ref{line:printClientConnect2}), increments the \code{nConn} counter
(line~\ref{line:incrementNrConnects2}), and prints the new value of
\code{nConn} (line~\ref{line:printNrConnects2}). Finally, the connection is
piped back to itself (line~\ref{line:pipeConnection2}), which has the effect of
mirroring input from the client back to the client.

To handle this situation in \Borges, we model the call to \code{svr.listen}
(line~\ref{line:serverListen2}) as a \emph{registration} of the \code{lstn}
event handler, with an implicit \emph{emission} occurring immediately
afterwards. Similarly, the call to \code{svr.on} is a registration of the
\code{conn} event handler on the \code{cxn} event, with an implicit
event emission from the library. When the callbacks are invoked from the event
loop, we model them as event handler \emph{invocations}. This gives us the same
micro-functions as our previous examples: $\emit \circ \register$ for the
\code{lstn} callback when \code{svr.listen} is called, $\emit \circ
\register$ for the \code{conn} callback when \code{svr.on} is called,
and $\invoke$ when \code{lstn} and \code{conn} are invoked. If we
consider the path where \code{conn} executes before \code{lstn} and
apply the composed micro-function to the start state, then we get $\{
h_\textsf{lstn} \mapsto E, h_\textsf{conn} \mapsto X \}$. Therefore,
this is an infeasible path that cannot occur at run time, so the uninitialized
variable on line~\ref{line:incrementNrConnects2} is a false positive.
 
\end{document}